\newcommand{\bX}{\mathbf{X}}
\newcommand{\by}{\mathbf{y}}
\newcommand{\bD}{\boldsymbol{D}}
\newcommand{\ba}{ \color{black} }
\newtheorem{definition}{Definition}
\newtheorem{theorem}{Theorem}
\title{\bf Differentially Private Significance Tests for Regression Coefficients}
\author{Andr\'es F. Barrientos, Jerome P. Reiter, Ashwin Machanavajjhala, Yan Chen \footnote{Andr\'es F. Barrientos is Postdoctoral Associate, Department of Statistical Science, Duke University, Durham, NC 27708 (email: afb26@stat.duke.edu); Jerome Reiter is Professor, Department of Statistical Science, Duke University, Durham, NC 27708 (jerry@stat.duke.edu); Ashwin Machanavajjhala is Assistant Professor, Department of Computer Science, Duke University, Durham, NC 27708 (ashwin@cs.duke.edu); and, Yan Chen is Graduate Student, Department of Computer Science, Duke University, Durham, NC 27708 (yanchen@cs.duke.edu).}}
\begin{document}

\date{}
\maketitle

\begin{abstract}
Many data producers seek to provide users access to confidential data without unduly compromising data subjects' privacy and confidentiality.  One general strategy is to require users to do analyses 
without seeing the confidential data; for example, analysts only get access to synthetic data or query systems that provide disclosure-protected outputs of statistical models.  With synthetic data or redacted outputs, the analyst never really knows how much to trust the resulting findings. In particular, if the user did the same analysis on the confidential data, would regression coefficients of interest be statistically significant or not?  We present  algorithms for assessing this question that satisfy differential privacy.  We describe conditions under which the algorithms should give accurate answers about statistical significance.  We illustrate the properties of the proposed methods using artificial and genuine data.

{\em Keywords: Confidentiality, Disclosure, Laplace, Query, Synthetic, Verification} 
\end{abstract}

\section{Introduction}\label{Intro}

In many settings, data producers such as national statistical agencies, survey organizations, health systems, and private sector companies---henceforth all called agencies---seek to provide researchers and the broader public access to data on individual records. However, these agencies are ethically and often legally obligated to protect the confidentiality of data subjects' identities and sensitive attributes.  Research has shown that stripping obvious identifiers, like names and addresses, may not suffice to protect confidentiality  \citep[e.g.,][]{sweeney;1997,sweeney;2013, narayanan;shmatikov;2008,parry;chase;2011}.  Ill-intentioned users---henceforth called intruders---may be able to learn sensitive information by linking released data files to records in external databases by matching on fields common to both datasets, thereby breaking the protection from de-identification.

In recognition of this threat, agencies have developed and deployed techniques that allow users to do analyses without seeing the actual data.  One approach is to use remote access query systems \citep{statsciserverkarr} in which the user submits a query  to a server that holds the data for output from some statistical model.  The server runs the query and reports back the analysis results to the user, e.g., estimated regression coefficients and their standard errors, without ever allowing the user to see the individual-level data.  To further reduce disclosure risks, the outputs usually are coarsened or perturbed \citep{okeefe1}; for example, the server can add noise to outputs that satisfies the risk criterion differential privacy \citep{dwork;mcsherry;nissim;smith;2006,wasserman;zhou;2010}. Query system approaches are used by many government agencies, such as the Census Bureau and the Australian Bureau of Statistics,  and are being implemented for general social science data access, for example in DataVerse \citep{king;2007,crosas;2011} and in the Private data Sharing Interface \citep{gaboardi;honake;king;nissim;ullman;vadham;2016}.  A second approach is to release fully synthetic data \citep{rubin;93,raghunathan;reiter;rubin;2003,reiter;2005b, drechslerbook}.   Here, the agency generates new values for every confidential datum by sampling from a predictive distribution estimated with the confidential data.  Since all values are simulated, it is nonsensical for intruders to match released cases to external records.  Synthetic data have been used in several public use data products, including the Longitudinal Business Database  \citep{kinney;reiter;reznek;miranda;jarmin;abowd;2011},  the Survey of Income and Program Participation  \citep{abowd;stinson;benedetto;2006}, and the OnTheMap application \citep{machanavajjhala;kifer;abowd;gehrke;vilhuber;2008}.

While query systems and synthetic data are appealing options for data release, they have a significant drawback: it is difficult for analysts to know how much they should trust the results of their analyses.  For example, in a query system, the user might ask for outputs from a regression model that, in actuality, fits poorly on the confidential data. This lack of fit cannot be easily detected from the coefficients and standard errors alone, whether they are perturbed or not.  Additionally, the steps taken to perturb the outputs could infuse substantial error into the reported coefficients.  Similar dilemmas arise for synthetic data.  By default, the synthetic data reflect only those distributional features and relationships encoded in the synthesis models  \citep{reiter;2005b}. The synthesis models may fail to describe the data in ways that lead the analyst to findings that are not supported by the confidential data.  Further, even when the synthesis models adequately describe the distributions in the confidential data, the process of generating synthetic data tends to increase standard errors, which could obscure important relationships.

The literature on privacy-preserving data analysis has begun to address aspects of this problem.
\cite{reiter;2003}  suggests that linear regression output from query systems be accompanied by synthetic plots of residuals versus predicted values.  Related residual diagnostics for logistic regressions are proposed in \citet{reiter;kohnen;2005} and \citet{o2009regression}. \cite{chen;machanavajjahala;reiter;barrientos;2016}  present an algorithm for releasing residual plots (and also an algorithm for ROC curves for logistic regression) that satisfies differential privacy. Their algorithm takes as input the privately-estimated coefficients, which could come from noisy outputs or a synthetic data analysis.  While useful diagnostic tools, these plots do not provide analysts with means to compare inferences obtained via the privacy-preserving mechanism to those that would be obtained from the confidential data.  It may be, for example, that a particular regression coefficient of substantive interest has a large p-value in the noisy output or synthetic data, even though it has a small one in the confidential data, or vice versa.  

In this article, we present algorithms for comparing the sign and significance level of privately-computed regression coefficients, i.e., those computed via output perturbation or via synthetic data, with those computed from the confidential data.    We envision the outputs of these algorithms being delivered to users via a verification server \citep{reiter;oganian;karr;2009}.  This is a query system that allows users to ask for measures indicating how similar privately-computed results are to those based on the confidential data without allowing users to see the confidential data.  The algorithms satisfy differential privacy, which has important benefits in this interactive context.  As shown in \cite{reiter;oganian;karr;2009} and \cite{mcclure;reiter;2012}, when verification servers provide exact (unperturbed) answers to queries about similarity of results, intruders can query the server repeatedly to gather information that, in combination, provides unacceptably tight ranges for individual confidential values. Differential privacy provides provable bounds for the amount of information leaked by the server over repeated queries, regardless of their nature. 

The basic idea of the algorithms is built on the subsample and aggregate mechanism of \cite{nissim;raskhodnikova;smith;2007}.  We randomly partition the confidential data into $M$ disjoint subsets. In each subset, we estimate the regression using only the data in that subset, from which we compute the univariate $t$-statistic for the regression coefficient(s) of interest to the user.  We  truncate each $t$-statistic at some user-defined threshold $a$; this facilitates  differentially private algorithm design, as we discuss later. We add noise to the average of the truncated $t$-statistics, sampled from a Laplace distribution with variance tuned to satisfy differential privacy.  We refer the resulting noisy statistic to an appropriate reference distribution under the null hypothesis that the coefficient equals zero, resulting in calibrated p-values.  The p-value can be used directly as evidence of the significance of the coefficient, or it can be compared with the corresponding, privately-computed p-value for purposes of verification.  The sign of the noisy  $t$-statistic also provides a differentially private estimate of the sign of the coefficient.

We are not aware of algorithms for differentially private significance tests for linear regression coefficients, although the literature on differential privacy includes significance tests for models appropriate for other settings.  Several authors have developed differentially private significance tests for categorical data and contingency tables  \citep[e.g.,][]{vu;slavkovic;2009,gaboardi;lim;rogers;vadhan;2016,wang;jaewoo;kifer}.  These tests cannot be sensibly used for linear regression, where the outcome variable, as well as potentially some of the explanatory variables, are assumed to be continuous rather than categorical. 
\cite{solea;2014} and \cite{dorazio;honaker;king;2015} propose differentially private significance tests for the mean and the difference of means of Gaussian random variables, respectively. 
These authors assume that bounds for the means or the data values are known, whereas we work in multivariate regression settings where bounds on the variables need not be known.
\cite{campbell;2018;DP_ANOVA} propose a differentially private algorithm for analysis of variance, which is a special case of linear regression with only categorical explanatory variables.  They restrict results to outcomes that lie on the unit interval, do not consider continuous predictors, and report only the result of the omnibus significance test that all coefficients simultaneously equal zero. \cite{karwa;vadhan;2017,karwa;vadhan;2018} propose differentially private algorithms to obtain confidence intervals for single means of Gaussian random variables, which can be inverted to significance tests.  Their approach relies on spending some privacy budget to bound the range and standard deviation of the data values for the single variable.  Their approach does not apply in a straightforward manner to regression modeling with multiple explanatory variables, as the expression for the variance of any estimated regression coefficient has a numerator and denominator that are non-linear functions of all the variables used in the regression model.


Multiple authors have developed differentially private algorithms for estimating pieces of the outputs needed for significance testing of coefficients in linear regressions or other predictive models    
\citep[e.g.,][]{sarlos;2006, chaudhuri2009privacy,dwork;lei;2009, chaudhuri2011differentially,kifer;smith;thakurta;2012,zhang;zhang;xiao;yang;winslett;2012,bassily;smith;thakurta;2014, karwa2015private, wu2015revisiting, honkela;das;dikmen;kaski;2016, jing;charest;slavkovic;smith;fienberg;2018}.
None of these works includes procedures to compute standard errors, making it impossible to conduct significance tests.
\citet{sheffet;2015} presents an algorithm for estimating regressions that does provide standard errors and hence significance tests; however, the algorithm 
sometimes returns output associated with a variant of ridge regression rather than strictly linear regression. The algorithm also requires 
all data values to be bounded, which we do not require in our algorithms. Finally, the algorithm appears to require $(\epsilon, \delta > 0)$-differential privacy to give useful outputs, whereas our algorithm allows for $\epsilon$-differential privacy,


The remainder of the article is organized as follows. In Section \ref{Background}, we review differential privacy and some of the techniques used to design algorithms that satisfy it. In Section \ref{Method_dev}, we present the algorithm for the differentially private $t$-statistic, including its reference distribution. In Section \ref{sec:theory}, we discuss some theoretical aspects of the differentially private $t$-statistic, focusing on approximation properties and type II error rates. In Section \ref{Illustrations}, we present results of simulation studies that illustrate the performance of the differentially private $t$-statistic in finite samples.  In Section \ref{ChoosingParams}, we present an approach for choosing the number of partitions and the threshold level.  These drive the accuracy and usefulness of the inferences.  In Section \ref{conclusion}, we conclude with suggestions for implementation of these techniques, as well as discuss future research topics around verification of privately-computed regression quantities.

\section{Review of Differential Privacy}\label{Background}

Before reviewing differential privacy, we motivate why one should not simply release verification measures without redaction. 
Suppose that an intruder asks a verification server to return the value of the $t$-statistic for the slope in a regression of an outcome $y$ on a single predictor $x$, and that the verification server provides this value from the regression estimated with the confidential data.
Consider a worst case scenario: the intruder knows the values of $(x_i, y_i)$ for all but one record in the confidential data.  If the intruder submits a regression involving all records and then requests the $t$-statistic, the user can try various combinations of $(x, y)$ for that unknown record until finding the set of values that yield the reported $t$-statistic.  More generally, similar attacks work for an intruder who knows the values of $(x_i, y_i)$ for any $r$ records in the confidential data when the intruder can request output from a regression estimated with those $r$ records plus one additional record.



As these examples illustrate, it is desirable to redact the verification measures before releasing them, which we do using differential privacy.
Let $\mathcal{A}$ be an algorithm that takes as input a database $\mathbf{D}$ and outputs some quantity $o$, i.e., $\mathcal{A}(\mathbf{D}) = o$.  In our context, these outputs are used to form verification measures for the $t$-statistic and sign.  Define neighboring databases, $\mathbf{D}$ and $\mathbf{D}'$, as databases that differ in one row and are identical for all other rows.
Specifically, $\mathbf{D}$ and $\mathbf{D}'$ are neighboring databases if there exists only one record $d \in \mathbf{D}$ and one record $d' \in \mathbf{D}'$ such that $d \neq d'$ and $\mathbf{D} - \{d\} = \mathbf{D}' - \{d'\}$. 

\begin{definition}[$\epsilon$-differential privacy]
An algorithm $\mathcal{A}$ satisfies $\epsilon$-differential privacy if for any pair of neighboring databases $(\mathbf{D}, \mathbf{D}')$, and any non-negligible measurable set $S \subseteq range(\mathcal{A})$,
the $Pr(\mathcal{A}(\mathbf{D}) \in S) \leq \exp(\epsilon) Pr(\mathcal{A}(\mathbf{D}') \in S).$
\end{definition}
Intuitively, $\mathcal{A}$ satisfies $\epsilon$-DP when the distributions of its 
outputs are similar for any two neighboring databases, where similarity is defined by the factor $\exp(\epsilon)$.   The $\epsilon$, also known as the privacy budget, controls the degree of the privacy offered by $\mathcal{A}$, with lower values implying greater privacy guarantees. $\epsilon$-DP is a strong criterion, since
even an intruder who has access to all of $\mathbf{D}$ except any one row learns little from $\mathcal{A}(\mathbf{D})$  about the values in that unknown row when $\epsilon$ is small.

Differential privacy has three other properties that are appealing for verification measures.  
Let $\mathcal{A}_1(\cdot)$ and $\mathcal{A}_2(\cdot)$ be $\epsilon_1$-DP and $\epsilon_2$-DP algorithms.  First, 
for any database $\mathbf{D}$, releasing the outputs of both $\mathcal{A}_1(\mathbf{D})$ and $\mathcal{A}_2(\mathbf{D})$ ensures $(\epsilon_1 + \epsilon_2)$-DP.  Thus, we can quantify and track the total privacy leakage from releasing verification measures.  Second, releasing the outputs of both $\mathcal{A}_1(\mathbf{D}_1)$ and $\mathcal{A}_2(\mathbf{D}_2)$, where $\mathbf{D}_1 \cap \mathbf{D}_2 = \emptyset$, satisfies $\max \{\epsilon_1, \epsilon_2\}$-DP.  Third, for any algorithm $\mathcal{A}_3(\cdot)$, releasing $\mathcal{A}_3(\mathcal{A}_1(\mathbf{D}))$ for any $\mathbf{D}$ still ensures $\epsilon_1$-DP.  Thus, post-processing the output of $\epsilon$-DP algorithms does not incur extra loss of privacy.



A common method for ensuring $\epsilon$-DP is the Laplace Mechanism \citep{dwork;mcsherry;nissim;smith;2006}.
For any function $f : \mathbf{D} \rightarrow \mathbb{R}^d$, let $\Delta(f) = \max_{(\mathbf{D_1},\mathbf{D_2})} ||f(\mathbf{D}_1) - f(\mathbf{D}_2)||_{1}$, where $(\mathbf{D}_1, \mathbf{D}_2)$ are 
neighboring databases.  This quantity, known as the global sensitivity of $f$, is 
the maximum $L_1$ distance of the outputs of the function $f$ between any two neighboring databases.
%
The Laplace Mechanism is 
\begin{equation}\label{def_LM}
\mathbf{LM}(\mathbf{D}) = f(\mathbf{D}) + \eta,
\end{equation}
where $\eta$ is a $d \times 1$ vector of 
independent draws from a Laplace distribution with density $p(x \mid \lambda) = (1/(2\lambda)) \exp(-|x| / \lambda)$, where $\lambda = \Delta(f) / \epsilon$.
We use the Laplace Mechanism to design verification measures that satisfy $\epsilon$-differential privacy, which we refer to as 
 $\epsilon$-DP verification measures.

We also use the subsample and aggregate technique \citep{nissim;raskhodnikova;smith;2007}. This technique allows us to reduce the global sensitivity of $f$, thereby reducing the variance in the noise distribution. To implement this technique, we randomly partition the dataset $\bD$ into $M$ disjoint subsets, $\bD_1,\ldots,\bD_M$. We then compute $f(\bD_1),\ldots,f(\bD_M)$ and their average $M^{-1}\sum_{l=1}^M f(\bD_l)$.  The global sensitivity of $M^{-1}\sum_{l=1}^M f(\bD_l)$ is $1/M$ times that of $f(\bD)$, since any single observation appears in at most one of the partitions.  Finally, we use the Laplace mechanism to release a noisy version of $M^{-1}\sum_{l=1}^M f(\bD_l)$.  

\ba

\section{The Differentially Private Test Statistic}\label{Method_dev}

We begin by laying out relevant notation and formally specifying our objectives. Let $\bD$ be a confidential dataset comprising $n$ individuals. For each individual $i=1, \dots, n$, let $y_i\in \mathbb{R}$ be its univariate response variable and $x_i = (1,x_{i,1},\ldots,x_{i,p})^\top \in \mathbb{R}^{p+1}$ be its $(p+1) \times 1$ vector of predictors.  Hence,  $\bD =\{(x_i, y_i)\}_{i=1}^n$.  An analyst seeks to estimate the parameters in the regression, $y_i =  \boldsymbol{\beta}^\top x_i + e_i$, where $\boldsymbol{\beta} = (\beta_0,\ldots,\beta_p)^\top \in \mathbb{R}^{p+1}$ and $e_i$ are i.i.d. random errors with $E(e_i)=0$ and $Var(e_i)=\sigma^2$. In linear regression, we typically assume that $e_i \sim N(0, \sigma^2)$ for all $i$, although our algorithm can be used with other error distributions.  We assume that, if the analyst had direct access to $\bD$, he or she would make inferences about each $\beta_j$ based on the maximum likelihood estimator (MLE), $\hat{\beta_j}$, and its corresponding sampling distribution.  However, the analyst does not get direct access to $\bD$; instead, the analyst can learn only the privately-computed estimates $\tilde{\beta_j}$, which could arise from perturbed versions of $\hat{\beta}_j$ or from synthetic data.  
Our key question is whether or not inferences about $\beta_j$ are similar when using $\hat{\beta}_j$ or $\tilde{\beta_j}$.

To address this question, we develop differentially private significance tests.  Let $T(\bD)$ be the standardized estimator of $\beta_j$ obtained from $\bD$, that is, 
$$
T(\bD) = \hat{\beta_j}/ \sqrt{\hat{\Sigma}_{j,j}},
$$
where $\left(\hat{\Sigma}_{j,j}\right)$ is the $(j,j)$th element of the matrix, $\hat{\Sigma} = \hat{\sigma}^2(\bX_{\bD}^\top\bX_{\bD})^{-1}$. Here,  $\hat{\sigma}^2 = (\by_{\bD}-\hat{\boldsymbol{\beta}}^\top\bX_{\bD})^\top(\by_{\bD}-\hat{\boldsymbol{\beta}}^\top\bX_{\bD})/(n-p-1)$, where $\by_{\bD}=(y_1,\ldots,y_n)^\top$, and $\bX_{\bD}=[x_1^\top,\ldots,x_n^\top]^\top$ is the design matrix associated with the regression when estimated with $\bD$.  Under certain conditions on $\bX_{\bD}$, $\hat{\beta_j}$  is asymptotically normally distributed for a large range of error distributions (\citeauthor{vandervaart;2000}, \citeyear{vandervaart;2000}, page 21;  \citeauthor{bhattacharya;lin;patrangenaru;2016}, \citeyear{bhattacharya;lin;patrangenaru;2016}, section 6.8).
The key condition on the design is that the maximum among the diagonal elements of the matrix $\bX_{\bD}(\bX_{\bD}^\top \bX_{\bD})^{-1}\bX_{\bD}^\top$ goes to zero as $n \rightarrow \infty$.  Hence, for a large enough $n$, the distribution of $T(\bD)$ can be suitably approximated by a standard Gaussian distribution. In the remainder of the article, we refer to $T(\bD)$ simply as the $t$-statistic.

$T(\bD)$ provides all the information needed for inferences about the sign and significance of $\beta_j$.  Hence, we can address our key question and account for privacy by developing algorithms for releasing differentially private versions of $T(\bD)$, along with deriving reference distributions for the private $t$-statistics.  Taken together, these algorithms enable the analyst to assess the significance level for $\beta_j$ directly from the private output.  

Unfortunately, we cannot simply apply the Laplace mechanism in (\ref{def_LM}) to create the differentially private test statistic, as the global sensitivity of $T(\bD)$ is unbounded.  A possible remedy  is to work with some bounded statistic instead of $T(\bD)$.  For statistical significance, two obvious candidates include (i) the p-value associated with $T(\bD)$ and (ii) a truncated version of $T(\bD)$.  We next describe some of the pros and cons of each approach.

Let $p^T$ be the p-value associated with $T(\bD)$ for a two-tailed significance test of the null hypothesis $\beta_j=0$. Any $p^T$ has global sensitivity equal to one. Let $p^{T,\epsilon}$ be the $\epsilon$-differentially private p-value obtained after adding Laplace noise to $p^T$ based on the global sensitivity of one. With high probability, adding this noise to small values of $p^T$ could inflate them so much as to change our opinion of the significance of $\beta_j$. This is less problematic when adding noise to large values of $p^T$. In other words, for a given $\epsilon$, the probability that an analyst reaches the same decisions about statistical significance when using $p^{T,\epsilon}$ or $p^{T}$ is higher when $p^{T}$ falls in an acceptance region for $H_0$ than when $p^{T}$ falls in a rejection region. 

Regarding the second approach, let the truncated $t$-statistic be given by   
\begin{eqnarray}\label{truncatedt}
T^t(\boldsymbol{D}) 
=
\left\{
\begin{array}{ll}
-a & \mbox{if } T(\bD) < -a,\\
T(\bD) &  \mbox{if } -a \leq T(\bD) \leq a,\\
 a &  \mbox{if }T(\bD) > a, \\
\end{array}
\right.
\end{eqnarray}
where $a>0$ is a user-defined parameter. Here,  $a$ has to be large enough to ensure that $T(\boldsymbol{D})$ and $T^t(\boldsymbol{D})$ lead to the same conclusion regarding the null hypothesis with high probability.  Because of the truncation, the global sensitivity of $T^t(\boldsymbol{D})$ equals $2a$.  Let  $T^{t,\epsilon}(\boldsymbol{D})  = T^t(\boldsymbol{D})  + \eta$ be a noisy version of $T^t(\boldsymbol{D})$, where $\eta \sim \mathrm{Lap}(0,2a/\epsilon)$ and $\mathrm{Lap}(l,s)$ denotes the Laplace distribution with location $l$ and scale $s$. 
The problematic situations for $T^{t,\epsilon}(\boldsymbol{D})$ are the reverse of those for $p^T$.
With undesirably high probability, adding noise to values of $T^t(\boldsymbol{D})$ near zero could make an insignificant effect appear significant, whereas the noise is not likely to change our opinion about significance when $T(\boldsymbol{D})$ is large. Put another way, 
for a given $\epsilon$, the probability that an analyst reaches the same decisions about statistical significance when using $T^{t,\epsilon}(\boldsymbol{D})$ or $T^{t}(\boldsymbol{D})$ is higher when $T^{t}(\boldsymbol{D})$ falls in a rejection region for $H_0$ than when $T^{t}(\boldsymbol{D})$ falls in an acceptance region.

The arguments above suggest that neither approach always outperforms the other.  We opt for the second approach because $T^t(\bD)$ is more analytically tractable than $p^T$. As a result, we find it easier to develop a properly calibrated significance test, and understand its theoretical properties, for $T^t(\bD)$ than for $p^T$.  Using $T^t(\bD)$ also allows the release of a noisy estimate of the sign of $\beta_j$ without additional expenditure of $\epsilon$, which improves the overall utility of the data release without sacrificing privacy.


Since the length of the range of $T^t(\bD)$ coincides with its global sensitivity,
we need to use a large $\epsilon$ to ensure that $T^{t,\epsilon}(\boldsymbol{D})$ is practically useful, perhaps larger than what we would like from the perspective of protecting privacy.  Put another way, for a small $\epsilon$ the noise introduced by the Laplace mechanism may be so large compared to $T^t(\bD)$ that the statistic has little ability to detect any deviations from the null hypothesis.  Hence, we need to adapt the truncated $t$-statistic to reduce the global sensitivity.

We consider a way to do so based on the subsample and aggregate method described in Section \ref{Background}. We first randomly partition $\bD$ into $M$ disjoint subsets, $\mathcal{P} = \{\bD_1,\ldots,\bD_M\}$, of equal size (or as close to equal as possible when $n/M$ is not an integer).  In each $\bD_l$, we estimate the regression model of interest using only $\bD_l$. For the regression coefficient of interest, we then compute the set of $M$  $t$-statistics, $\{T(\bD_1), \dots, T(\bD_M)\}$, and truncate each $T(\bD_l)$ at $[-a,a]$, akin to (\ref{truncatedt}). Let $\{T^t(\bD_1), \dots, T^t(\bD_M)\}$ be the set of $M$ truncated $t$-statistics.  We then compute  
\begin{equation*}
\bar{T}^{t}(\mathcal{P}) = \sum_{l=1}^M T^t(\bD_l)/M.
\end{equation*}
The sensitivity of $\bar{T}^{t}(\mathcal{P})$ is $1/M$ times the sensitivity of $T^t(\bD)$, which apparently achieves our goal. However, we do not create the differentially private measure by adding Laplace noise to $\bar{T}^{t}(\mathcal{P})$, for reasons we now describe.  

Because of the random partitioning, it is reasonable to consider each $\bD_l$ as a random sample from a population (with infinite sample size) and, thus, each $T^t(\bD_l)$ as a random draw from its sampling distribution.  We would like the sampling distributions of $T^t(\bD)$ and $\bar{T}^{t}(\mathcal{P})$ to be approximately the 
same, so that the significance test based on $\bar{T}^{t}(\mathcal{P})$ would have approximately the same power function as that based on $T^t(\bD)$.  When this is the case,  analysts should have high probability of reaching similar conclusions  when using the adapted $t$-statistic or $T^t(\bD)$.  However, the variance of $\bar{T}^t(\mathcal{P})$ is roughly $M$ times smaller than the variance of $T^t(\bD)$.  Thus, instead of using $\bar{T}^t(\mathcal{P})$ directly, 
we equate the variances by multiplying $\bar{T}^{t}(\mathcal{P})$  by $\sqrt{M}$; that is, we use $\bar{T}^{t,R}(\mathcal{P})= \sqrt{M}\bar{T}^t(\mathcal{P})$.  This changes the global sensitivity, as it increases from $2a/M$ to $2a/\sqrt{M}$.  Hence, the differentially private version of the $t$-statistic is $\bar{T}^{t,\epsilon}(\mathcal{P})  = \bar{T}^{t,R}(\mathcal{P})  + \eta$, where $\eta \sim \mathrm{Lap}(0,2a/\sqrt{M}\epsilon)$. 

Figure \ref{Figure_algorithms} displays the steps that agencies can use to release $\bar{T}^{t, \epsilon}(\mathcal{P})$ (Algorithm 1). The figure also describes simple Monte Carlo algorithms that can be used to approximate the sampling distribution of  $\bar{T}^{t, \epsilon}(\mathcal{P})$ (Algorithm 2).  This reference distribution can be used to obtain approximate p-values corresponding to the test statistic. The Monte Carlo simulations are needed to properly account for all sources of randomness, including the noise from the Laplace Mechanism. Taken together, Algorithm 1 and 2 provide a means to perform differentially private significance tests. Inferences about the sign of $\beta_j$ can be obtained from the sign of $\bar{T}^{t, \epsilon}(\mathcal{P})$. 

\begin{figure}[t]
\centering
\includegraphics[scale=0.65]{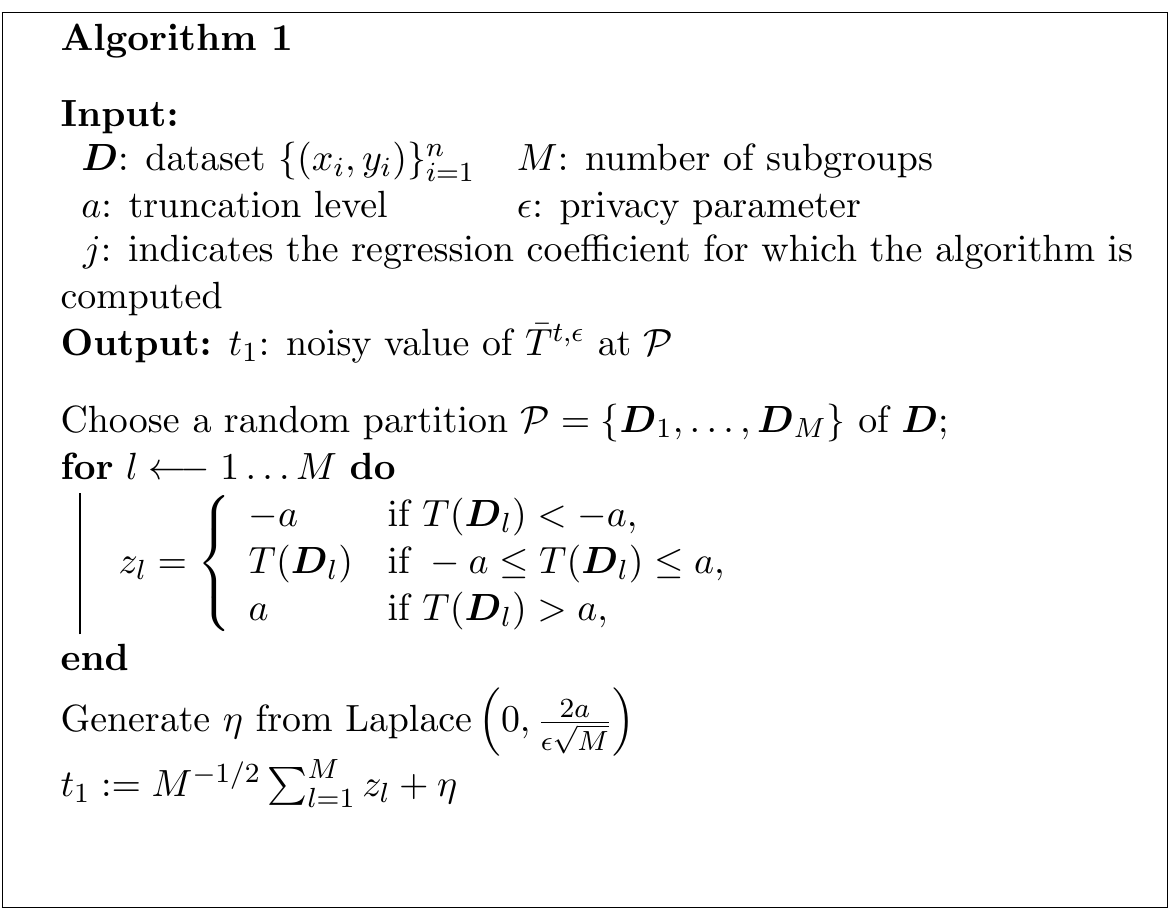}
\includegraphics[scale=0.65]{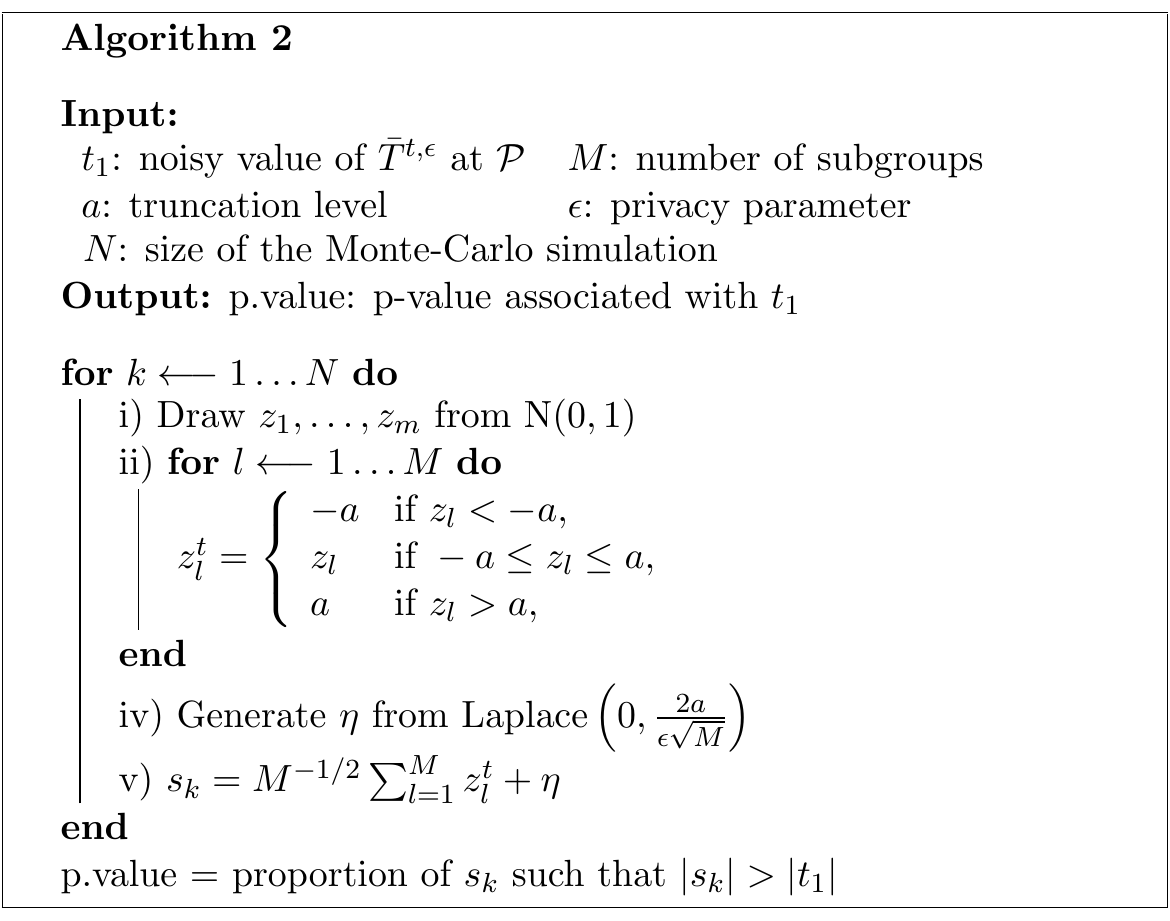}
\caption{Differentially private algorithms for $\bar{T}^{t,\epsilon}$. Algorithm 1 shows the steps to release values of $\bar{T}^{t,\epsilon}$, for a given $(M, a, \epsilon)$. Algorithm 2 shows how to compute the p-value associated with the released $t$-statistics of Algorithm 1.}
\label{Figure_algorithms}
\end{figure}


Finally, we conclude this section with a formal theorem and proof that Algorithm 1 is differentially private. 
\begin{theorem}\label{Algorithms_proof}
Algorithm 1 satisfies $\epsilon$-differential privacy.
\end{theorem}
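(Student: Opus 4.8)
The plan is to observe that Algorithm 1 outputs exactly the Laplace Mechanism (\ref{def_LM}) applied to the function $f = \bar{T}^{t,R}$, namely $\bar{T}^{t,\epsilon}(\mathcal{P}) = \bar{T}^{t,R}(\mathcal{P}) + \eta$ with $\eta \sim \mathrm{Lap}(0, 2a/(\sqrt{M}\epsilon))$. By the Laplace Mechanism guarantee, it therefore suffices to verify that the injected noise has scale $\Delta(\bar{T}^{t,R})/\epsilon$; equivalently, that the global sensitivity of $\bar{T}^{t,R}$ equals $2a/\sqrt{M}$. So the whole proof reduces to a single sensitivity computation, carried out in three short stages.

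First I would bound the sensitivity of a single truncated statistic: since $T^t(\bD_l)$ is confined to $[-a,a]$ by the definition (\ref{truncatedt}), the $L_1$ distance between its values on any two databases is at most $2a$, so $\Delta(T^t) = 2a$. Second I would pass to the average via the subsample-and-aggregate bookkeeping from Section \ref{Background}. Fix neighboring databases $(\bD,\bD')$ differing in the single record $d$ versus $d'$; because the partition is a data-independent randomization I would condition on the realized assignment of record positions to subsets, under which the differing record lands in exactly one subset $\bD_{l^\star}$ while the other $M-1$ subsets are identical across $\bD$ and $\bD'$. Hence only the term $T^t(\bD_{l^\star})$ can move, and it moves by at most $2a$, giving
\[
\left| \bar{T}^t(\mathcal{P}) - \bar{T}^t(\mathcal{P}') \right| \le \frac{2a}{M},
\]
which is precisely the factor-$1/M$ reduction promised by the technique. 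Third, multiplying by $\sqrt{M}$ to form $\bar{T}^{t,R} = \sqrt{M}\,\bar{T}^t$ scales the sensitivity by $\sqrt{M}$, yielding $\Delta(\bar{T}^{t,R}) = \sqrt{M}\cdot 2a/M = 2a/\sqrt{M}$, as required.

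With the sensitivity established, adding $\eta \sim \mathrm{Lap}(0, 2a/(\sqrt{M}\epsilon)) = \mathrm{Lap}(0, \Delta(\bar{T}^{t,R})/\epsilon)$ satisfies $\epsilon$-DP by the Laplace Mechanism, and integrating over the data-independent random assignment preserves the bound. The main obstacle I anticipate is making the conditioning on the random partition airtight: one must justify coupling the two partitions so that the differing record occupies the same slot, and then argue that averaging the per-assignment $\epsilon$-DP inequality over the mixture of assignments yields an overall $\epsilon$-DP guarantee rather than compounding the leakage across the $M$ subsets. Everything else is routine, and since the released quantity is a single scalar, the argument does not interact with the dimensionality of the regression.
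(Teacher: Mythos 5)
Your proposal is correct and follows essentially the same route as the paper: the paper's proof simply asserts that $\bar{T}^{t,R}$ has global sensitivity $2a/\sqrt{M}$ and invokes the Laplace Mechanism, which is exactly your argument with the sensitivity computation (truncation gives $2a$, disjoint partitioning gives $2a/M$, rescaling by $\sqrt{M}$ gives $2a/\sqrt{M}$) spelled out rather than left implicit. Your added care about conditioning on the data-independent partition is a reasonable refinement of a step the paper takes for granted, but it does not change the structure of the argument.
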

\begin{proof}
$\bar{T}^{t,R}$  has global sensitivity equal to $2a/\sqrt{M}$. Hence, defining 
$\bar{T}^{t,\epsilon}(\mathcal{P})  = \bar{T}^{t,R}(\mathcal{P}) + \mathrm{Lap}(2a/\sqrt{M}\epsilon)$ implies that, by Definition \ref{def_LM}, Algorithm 1 satisfies $\epsilon$-differential privacy.
\end{proof}

\section{Theoretical Properties}\label{sec:theory}

In this section, we discuss some of the theoretical properties of $\bar{T}^{t,\epsilon}(\mathcal{P})$. First, we derive conditions that characterize the distance between $\bar{T}^{t,\epsilon}(\mathcal{P})$ and $T(\bD)$. We then study the asymptotic probability of type II errors for the statistical test defined from $\bar{T}^{t,\epsilon}(\mathcal{P})$. 

\subsection{Distance between $\bar{T}^{t,\epsilon}(\mathcal{P})$ and $T(\bD)$}\label{ss:distance}

To characterize the distance between $\bar{T}^{t,\epsilon}(\mathcal{P})$ and $T(\bD)$, we focus on the probability \linebreak $
P\left\{\left|\bar{T}^{t,\epsilon}(\mathcal{P})- T(\bD)\right|>c |\bX_{\bD},\mathcal{R}\right\}
$ for all $c>0$, where $\mathcal{R}$ denotes the random mechanism used to partition $\bD$ and to create $\mathcal{P}$. We note that $\bD$ and $\mathcal{P}$ are functions of $(\by_{\bD},\bX_{\bD})$ and $(\by_{\bD},\bX_{\bD},\mathcal{R})$, respectively.  Since we are conditioning on $\bX_{\bD}$ and $\mathcal{R}$, the randomness in $T(\bD)$ and $\bar{T}^{t,\epsilon}(\mathcal{P})$ comes from treating $\by_{\bD}$ as a random variable. We use the triangle inequality to bound this probability and, in this way, focus on characterizing each of the terms defining the right hand-side of 
\begin{eqnarray} \nonumber
P\left\{\left|\bar{T}^{t,\epsilon}(\mathcal{P})- T(\bD)\right|>c |\bX_{\bD},\mathcal{R}\right\} & \leq & 
P\left\{\left|\bar{T}^{t,\epsilon}(\mathcal{P})-\bar{T}^{t,R}(\mathcal{P})\right|>c |\bX_{\bD},\mathcal{R}\right\} \\ \nonumber && + \ 
P\left\{\left|\sqrt{M} \bar{T}(\mathcal{P})-T(\bD)\right|>c |\bX_{\bD}, \mathcal{R}\right\} \\ \label{Distance.RHS} && + \
P\left\{\left|\bar{T}^{t,R}(\mathcal{P})-\sqrt{M} \bar{T}(\mathcal{P})\right|>c |\bX_{\bD}, \mathcal{R}\right\},
\end{eqnarray}
where $\bar{T}(\mathcal{P}) = M^{-1} \sum_{l=1}^M T(\bD_l)$. The first term is relatively straightforward to understand theoretically and corresponds to computing a probability under the Laplace distribution. Thus, 
\begin{eqnarray}\label{Prob.RHS1}
P\left\{\left|\bar{T}^{t,\epsilon}(\mathcal{P})-\bar{T}^{t,R}(\mathcal{P})\right|>c |\bX_{\bD},\mathcal{R}\right\} &=& 
\exp\left(-\frac{c\epsilon\sqrt{M}}{2a}\right). 
\end{eqnarray}
As expected, the bound in \eqref{Prob.RHS1} indicates that $\bar{T}^{t,\epsilon}(\mathcal{P})$ gets closer to $\bar{T}^{t,R}(\mathcal{P})$ as the scale of the underlying Laplace distribution, $2a/\epsilon\sqrt{M}$, goes to zero, that is, as $M$ and $\epsilon$ increase and $a$ decreases.  

Turning to the second term in \eqref{Distance.RHS}, we now provide conditions that ensure $\sqrt{M} \bar{T}(\mathcal{P})$ is a reasonable approximation of $T(\bD)$. To begin, we treat each $\bD_l$ as an independent sample from an infinite population.  We make inferences conditional on each $\bX_{\bD_l}$. Throughout, we rely on the following assumption.\\

\noindent {\bf Assumption A1.} {\it  The distribution of $\hat{\beta}_{jl}$, i.e., the MLE of $\beta_j$ estimated with $\bD_l$ treating $\bX_{\bD_l}$ as fixed, can be suitably approximated by a Gaussian distribution with mean $\beta_j$ and variance $\Sigma_{j,j}(\bD_l)$, where $\Sigma_{j,j}(\bD_l)$ denotes the $j$th diagonal element of $\Sigma(\bD_l) = {\sigma}^2(\bX_{\bD_l}^T\bX_{\bD_l})^{-1}$ and $\bX_{\bD_l}$ is the design matrix associated with  $\bD_l$.}\\

{\bf A1} is widely assumed in practice in regression modeling (see \citeauthor{vandervaart;2000}, \citeyear{vandervaart;2000}, page 21;  \citeauthor{bhattacharya;lin;patrangenaru;2016}, \citeyear{bhattacharya;lin;patrangenaru;2016}, section 6.8). 
Under {\bf A1}, treating each $\bD_l$ as independent samples and conditioning on $\bX_{\bD_l}$ implies that $\sqrt{M} \bar{T}(\mathcal{P})$ and $T(\bD)$ are both Gaussian-distributed with variance equal to one.  However, it is not necessarily the case that the means of each $T(\bD_l)$ are equal just because the conditional means of each $\hat{\beta}_{jl}$ are equal, nor that these means equal the mean of $T(\bD)$.  In particular, when $\beta_j \neq 0$,  $T(\bD_l)$ is based on a smaller sample size and a different design matrix than $T(\bD)$, which results in different (typically smaller)  expected values.  Thus, we need to derive conditions on the means of each $\sqrt{M}T(\bD_l)$, where $l=1, \ldots, M$, that guarantee the mean of $\sqrt{M} \bar{T}(\mathcal{P})$ is close to the mean of $T(\bD)$. 

Since each $\bD_l$ in actuality is a random sample from $\bD$, as long as the sample size in each partition is large it is reasonable to assume that $(\bX_{\bD_l}^\top\bX_{\bD_l}) \approx (\bX_{\bD_k}^\top\bX_{\bD_k})$ for all pairs of datasets $(l,k)$.  Moreover, it also is reasonable to make the following assumption.\\

\noindent {\bf Assumption A2.} {\it $M(\bX_{\bD_l}^\top\bX_{\bD_l}) \approx (\bX_{\bD}^\top\bX_{\bD})$ for all $l = 1, \ldots, M$.}\\

With {\bf A2}, we have  $M^{-1}\Sigma(\bD_l) \approx \Sigma(\bD)$.  Using this approximation, we take expectations of $\sqrt{M} \bar{T}(\mathcal{P})$ conditional on the realized $(\bX_{\bD}, \mathcal{R})$. We have 
\begin{align}\nonumber
E\left\{\sqrt{M} \bar{T}(\mathcal{P})|\bX_{\bD}, \mathcal{R}\right\} & = M^{-1} \sum_{l=1}^M E\left\{\sqrt{M} T(\bD_l)|\bX_{\bD}, \mathcal{R}\right\} \\\nonumber
& = \   M^{-1}  \sum_{l=1}^M \sqrt{M} \beta_j/\sqrt{\Sigma_{j,j}(\bD_l)}  \\\label{eq:Expectation}
& \approx \   M^{-1}  \sum_{l=1}^M \beta_j/\sqrt{\Sigma_{j,j}(\bD)}  = \   E\left\{T(\bD)|\bX_{\bD}\right\}.
\end{align}
Additionally, using the approximation in {\bf A2}, for any $\bD_l$ the
$$
Cov\left\{(\bX_{\bD}^\top \bX_{\bD})^{-1} \bX_{\bD}^\top \by_{\bD},(\bX_{\bD_l}^\top \bX_{\bD_l})^{-1} \bX_{\bD_l}^\top \by_{\bD_l} \mid \bX_{\bD}, \mathcal{R}\right\} =
\sigma^2(\bX_{\bD}^\top \bX_{\bD})^{-1} = \Sigma(\bD).
$$
Therefore, we have 
\begin{align}\nonumber
Cov\left\{\left.T(\bD),\sqrt{M}\bar{T}(\mathcal{P})\right|\bX_{\bD}, \mathcal{R}\right\} & = 
 \frac{\sqrt{M}}{M}\sum^M_{l=1}Cov\left\{\left.T(\bD),T(\bD_l)\right|\bX_{\bD}, \mathcal{R}\right\} \\\label{eq:Covariance}
& = 
\frac{1}{M}\sum^M_{l=1}\frac{\sqrt{\Sigma_{j,j}(\bD)}}{\sqrt{M^{-1}\Sigma_{j,j}(\bD_l)}}>0.
\end{align}
Thus, the smaller the distance is between $\bX_{\bD}^\top\bX_{\bD}$ and $M(\bX_{\bD_l}^\top\bX_{\bD_l})$, the higher is the correlation between $\sqrt{M} \bar{T}(\mathcal{P})$ and $T(\bD)$. 

A direct application of the Markov inequality implies that
\begin{eqnarray}\nonumber
c^2P\left\{\left|\sqrt{M} \bar{T}(\mathcal{P}) - T(\bD)\right|>c |\bX_{\bD},\mathcal{R}\right\}  &\leq&
\\\nonumber
&& \hspace*{-75mm} Var\left\{\sqrt{M} \bar{T}(\mathcal{P})|\bX_{\bD},\mathcal{R}\right\} + Var\left\{T(\bD)|\bX_{\bD},\mathcal{R}\right\}   - 2Cov\left\{\left.T(\bD),\sqrt{M}\bar{T}(\mathcal{P})\right|\bX_{\bD},\mathcal{R}\right\} \\\label{Prob.RHS2}
&& \hspace*{-20mm} + \left[E\left\{\left.T(\bD)\right|\bX_{\bD},\mathcal{R}\right\} - E\left\{\left.\sqrt{M}\bar{T}(\mathcal{P})\right|\bX_{\bD},\mathcal{R}\right\}\right]^2. 
\end{eqnarray}
Thus, under {\bf A2},
by {\bf A1} and \eqref{eq:Expectation} we have 
$$
Var\left\{\sqrt{M} \bar{T}(\mathcal{P})|\bX_{\bD},\mathcal{R}\right\} \approx Var\left\{T(\bD)|\bX_{\bD},\mathcal{R}\right\} \approx 1
$$
and 
$$
E\left\{\left.T(\bD)\right|\bX_{\bD},\mathcal{R}\right\} \approx E\left\{\left.\sqrt{M}\bar{T}(\mathcal{P})\right|\bX_{\bD},\mathcal{R}\right\}.
$$
By \eqref{eq:Covariance}, we have 
$$
Cov\left\{\left.T(\bD),\sqrt{M}\bar{T}(\mathcal{P})\right|\bX_{\bD},\mathcal{R}\right\} \approx 1.
$$
Therefore, the probability in \eqref{Prob.RHS2} is near zero, implying that the distance between $\sqrt{M} \bar{T}(\mathcal{P})$ and $T(\bD)$ has high probability of being small. 

Finally, we provide an upper bound for the last term in \eqref{Distance.RHS}. This bound allows us to understand how choices of $M$ and $a$ affect the distance between $\sqrt{M} \bar{T}(\mathcal{P})$ and $\bar{T}^{t,R}(\mathcal{P})$. Specifically, it follows that
\begin{eqnarray}\nonumber
P\left\{\left|\sqrt{M} \bar{T}(\mathcal{P}) - \bar{T}^{t,R}(\mathcal{P})\right|>c |\bX_{\bD},\mathcal{R}\right\}  &\leq&
P\left\{\left|\sqrt{M} \bar{T}(\mathcal{P}) - \bar{T}^{t,R}(\mathcal{P})\right|>0 |\bX_{\bD},\mathcal{R}\right\} \\\nonumber
& = & 1-P\left\{\sqrt{M} \bar{T}(\mathcal{P}) = \bar{T}^{t,R}(\mathcal{P}) |\bX_{\bD},\mathcal{R}\right\} \\\nonumber
& \leq & 1-\prod_{l=1}^M P\left\{T(\bD_l) = T^t(\bD_l)|\bX_{\bD},\mathcal{R}\right\} \\ \label{Prob.RHS3}
& = & 1-\left(\Phi\left(a-\mu\right) - \Phi\left(-a-\mu\right)\right)^M,
\end{eqnarray}
where $\mu = E\{T(\bD_l)|\bX_{\bD_l}, \mathcal{R}\}$ and $\Phi$ denotes the cumulative distribution function of the standard Gaussian distribution. The probability in \eqref{Prob.RHS3} reveals that $\bar{T}^{t,R}(\mathcal{P})$ gets closer to $\sqrt{M} \bar{T}(\mathcal{P})$ as $a$ increases and $M$ decreases. Together, \eqref{Prob.RHS1}--\eqref{Prob.RHS3} provide a full characterization of \eqref{Distance.RHS}. 

\subsection{Asymptotic power properties of  $\bar{T}^{t,\epsilon}(\mathcal{P})$}\label{ss:power}

We next study the type II error rates for the significance test defined from $\bar{T}^{t,\epsilon}(\mathcal{P})$. 
For given values of $(M, a, \epsilon)$, and under $H_0: \beta_j=0$ (i.e., $\mu=0$), let $r$ be a positive constant such that 
$$
P_{H_0}\left\{|\bar{T}^{t, \epsilon}(\mathcal{P})| < r \left| \bX_{\bD},\mathcal{R} \right. \right\}= 1-\alpha,
$$
where $P_{H_0}$ denotes the probability computed under $H_0$. Here,  $r$ corresponds to the critical value that ensures a significance level of $\alpha$ for the test $\mathbb{T}_{t,\epsilon}(\mathcal{P}) = \mathbb{I}_{(-r,r)}(\bar{T}^{t, \epsilon}(\mathcal{P}))$, where $\mathbb{I}_B(b)=1$ if $b\in B$ and $\mathbb{I}_B(b)=0$ otherwise. Thus, the type II error probability associated with this test is given by 
$E_{H_1}\left\{\mathbb{T}_{t,\epsilon}(\mathcal{P}) \left| \bX_{\bD},\mathcal{R} \right.\right\}$,
where $E_{H_1}$ denotes expectation under $H_1: \beta_j \neq 0.$ 

The strategy used to define $\bar{T}^{t, \epsilon}(\mathcal{P})$ makes it difficult to derive analytical expressions for $r$ and $E_{H_1}\left\{\mathbb{T}_{t,\epsilon}(\mathcal{P}) \left| \bX_{\bD},\mathcal{R} \right.\right\}$ in terms of $(\alpha,M,a,\epsilon)$. However, since  $\bar{T}^{t,\epsilon}(\mathcal{P})$ is a function of random variables that are easy to generate numerically, it is trivial to use Monte Carlo simulation to provide accurate approximations of $r$ and $E_{H_1}\left\{\mathbb{T}_{t,\epsilon}(\mathcal{P}) \left| \bX_{\bD},\mathcal{R} \right.\right\}$. This allows us assess type II error rates for the test both asymptotically, which we study in this section, and in finite samples, which we study in Section \ref{Illustrations}.


As $n$ goes to infinity, $\bar{T}^{t,R}(\mathcal{P})$ converges in probability to $\sqrt{M}a$.  Hence, the asymptotic probability of type II error is approximately equal to the probability of the acceptance region $(-r,r)$ under a Laplace distribution with location and scale equal to $\sqrt{M}a$ and $2a/\sqrt{M}\epsilon$, respectively. This characterization immediately reveals that the probability of type II error for this test never equals zero.

Figure \ref{typeIIerror} displays a Monte Carlo approximation of the asymptotic probability of type II error associated with the test $\mathbb{T}_{t,\epsilon}(\mathcal{P})$ for different values of $(\alpha,a,M,\epsilon)$. 
For most of the combinations of $(\alpha,M,a,\epsilon)$ studied here, it is possible to obtain an asymptotic probability of type II error that is close to zero, with $(\alpha=0.01, \epsilon = 0.1)$ being the lone exception. For any given $(\alpha,M,\epsilon)$, the asymptotic probability of type II error is almost constant as a function of $a$, that is, $\lim_{n\rightarrow \infty}E_{H_1}\left\{\mathbb{T}_{t,\epsilon}(\mathcal{P}) \left| \bX_{\bD},\mathcal{R} \right.\right\}$ decreases at a slow rate as $a$ increases.  The type II error probabilities also decrease as any one of $M$, $\alpha$, or $\epsilon$ increases, holding the others constant. We note that analysts can use this Monte Carlo approach to approximate the asymptotic probability of type II error for any combination of $(\alpha,M,a,\epsilon)$.  Thus, for example, for a fixed $(\alpha,\epsilon)$, analysts can determine values of $(M,a)$ that lead to a specified asymptotic probability of type II error. For example, for $\alpha= 0.05$, $\epsilon=1$, and $\lim_{n\rightarrow \infty}E_{H_1}\left\{\mathbb{T}_{t,\epsilon}(\mathcal{P}) \left| \bX_{\bD},\mathcal{R} \right.\right\}< 0.001$, we find that $M$ needs to be greater than 25 provided that $a$ is greater than one.

\begin{figure}[t]
\centering
\includegraphics[scale=0.75]{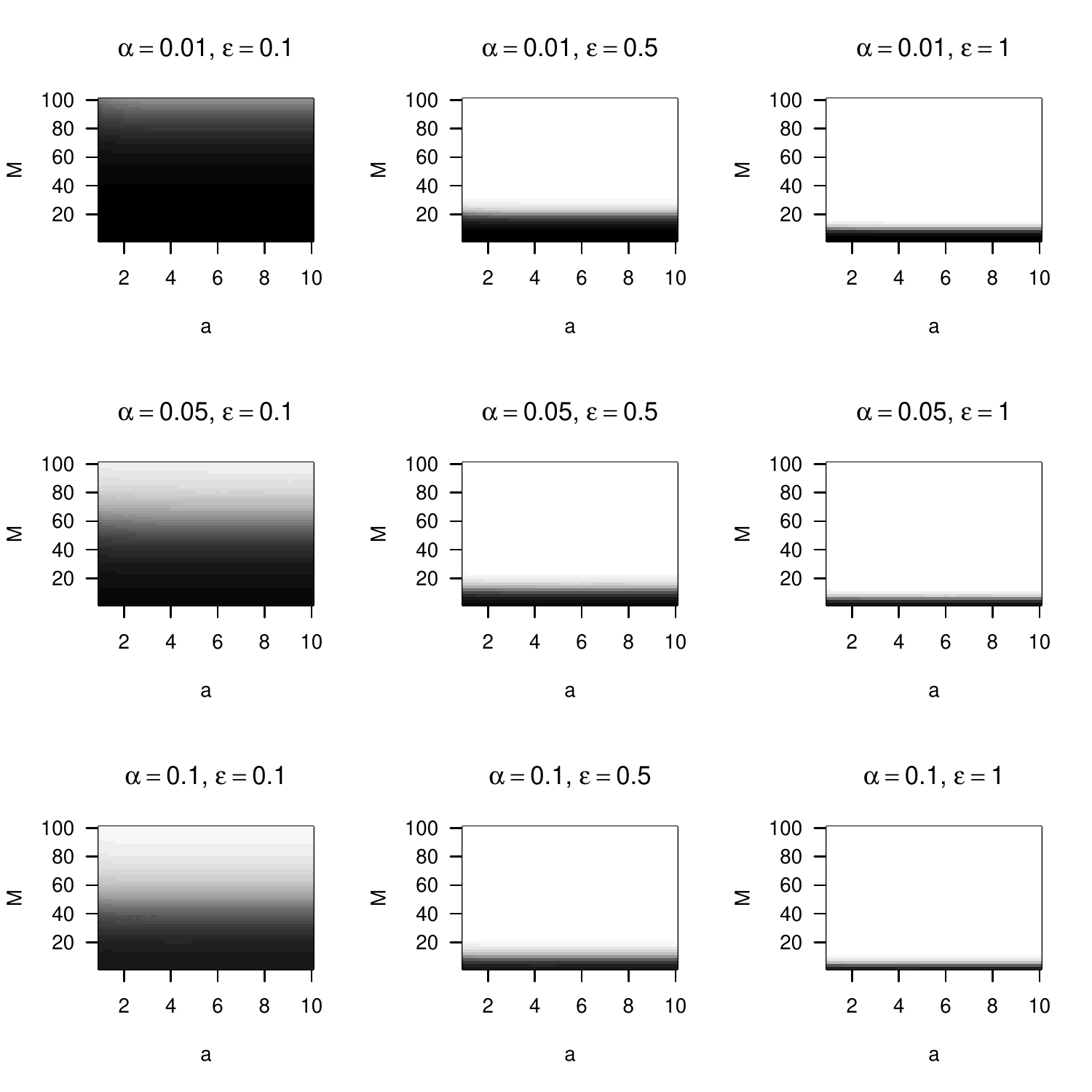}
\caption{Asymptotic probability of type II error associated with the test $\mathbb{T}_{t,\epsilon}(\mathcal{P})$ for different values of $(\alpha,a,M,\epsilon)$. Each asymptotic probability is computed using Monte Carlo approximation based on 10,000 realizations of $\mathbb{T}_{t,\epsilon}(\mathcal{P})$. Within any panel and combination $(M,a)$, black represents $\lim_{n\rightarrow \infty}E_{H_1}\left\{\mathbb{T}_{t,\epsilon}(\mathcal{P}) \left| \bX_{\bD},\mathcal{R} \right.\right\}  =1$ and white represents $\lim_{n\rightarrow \infty}E_{H_1}\left\{\mathbb{T}_{t,\epsilon}(\mathcal{P}) \left| \bX_{\bD},\mathcal{R} \right.\right\} = 0$, with lighter grays as the loss function approaches 0.   }
\label{typeIIerror}
\end{figure}

Although the Monte Carlo approach provides accurate and straightforward approximations, it is also instructive to characterize the asymptotic behavior of $E_{H_1}\left\{\mathbb{T}_{t,\epsilon}(\mathcal{P}) \left| \bX_{\bD},\mathcal{R} \right.\right\}$ mathematically under arbitrary choices of $(\alpha,a,M,\epsilon)$.   Theorem \ref{typeIIbound} provides an upper bound for $\lim_{n\rightarrow \infty}E_{H_1}\left\{\mathbb{T}_{t,\epsilon}(\mathcal{P}) \left| \bX_{\bD},\mathcal{R} \right.\right\}$. Analogous to Figure \ref{typeIIerror}, in the supplementary material we present a graphical representation of the upper bound. We observe that \eqref{eqtypeIIbound} is a sharp bound when $a>2$. For $1 \leq a \leq 2$ and some values of $M$, the bound in \eqref{eqtypeIIbound} is a moderately less precise, but still valid, bound for $\lim_{n\rightarrow \infty}E_{H_1}\left\{\mathbb{T}_{t,\epsilon}(\mathcal{P}) \left| \bX_{\bD},\mathcal{R} \right.\right\}$.  The proof of Theorem \ref{typeIIbound} is in the supplementary material.

\begin{theorem}\label{typeIIbound}
Under $H_1: \beta_j \neq 0$ and assumption {\bf A1},
\begin{eqnarray}\nonumber
\lim_{n \rightarrow \infty} E\left\{\mathbb{T}_{t,\epsilon} \left| \bX_{\bD},\mathcal{P} \right.\right\}
 & < & 
\mathbb{I}_{\{r^*<\sqrt{M}a\}}
\frac{1}{2}\left(\alpha^{-1-\epsilon\sqrt{M}/2a}-\alpha^{1+\epsilon\sqrt{M}/2a}\right)\exp\left(-\frac{\epsilon M}{2}\right)
\\\label{eqtypeIIbound} & &
\hspace*{-10mm}+ \ 
\mathbb{I}_{\{r^*>\sqrt{M}a\}}\left(
1-\frac{1}{2}\alpha^{1+\epsilon\sqrt{M}/2a}\left[\exp\left(\frac{\epsilon M}{2}\right)- \exp\left(-\frac{\epsilon M}{2}\right)\right]\right),
\end{eqnarray}
where $r^*=-log(\alpha)\left(2a/\epsilon\sqrt{M}+1\right)$.
\end{theorem}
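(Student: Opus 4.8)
The plan is to reduce the asymptotic type II error to an explicit Laplace-interval probability, control the (non-closed-form) critical value $r$ by the explicit quantity $r^*$, and then evaluate the interval probability in closed form, splitting on the sign of $r^*-\sqrt{M}a$. The excerpt already supplies the key limiting fact: under $H_1$ (take $\beta_j>0$ without loss of generality, as the two-tailed test is symmetric) each $T(\bD_l)\xrightarrow{p}+\infty$, so $T^t(\bD_l)\xrightarrow{p}a$ and $\bar{T}^{t,R}(\mathcal{P})\xrightarrow{p}\sqrt{M}a$; hence $\bar{T}^{t,\epsilon}(\mathcal{P})$ converges in distribution to $\sqrt{M}a+\eta$ with $\eta\sim\mathrm{Lap}(0,2a/(\epsilon\sqrt{M}))$. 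Consequently $\lim_{n\to\infty}E_{H_1}\{\mathbb{T}_{t,\epsilon}(\mathcal{P})\mid\bX_{\bD},\mathcal{P}\}=P\{|\sqrt{M}a+\eta|<r\}$, which equals the Laplace mass on the interval $(-r-\sqrt{M}a,\ r-\sqrt{M}a)$. Writing $s=2a/(\epsilon\sqrt{M})$ for the Laplace scale, I would record the elementary identities $\sqrt{M}a/s=\epsilon M/2$ and $e^{\pm r^*/s}=\alpha^{\mp(1+\epsilon\sqrt{M}/2a)}$, which are exactly the quantities appearing in \eqref{eqtypeIIbound}.

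The second step is to bound the critical value by $r^*$, i.e.\ to show $r\le r^*$. This is the heart of the argument and the main obstacle. The difficulty is that, unlike under $H_1$, the statistic does \emph{not} degenerate under $H_0$: by {\bf A1} each $T(\bD_l)$ remains asymptotically standard Gaussian, so $\bar{T}^{t,R}(\mathcal{P})$ is a $\sqrt{M}$-scaled average of i.i.d.\ truncated standard normals supported on $[-a,a]$, convolved with Laplace noise, and its $(1-\alpha)$ two-sided quantile $r$ has no closed form. The plan is to control the upper tail $P_{H_0}\{|\bar{T}^{t,\epsilon}(\mathcal{P})|\ge r^*\}$ by combining the bounded range of the truncated summands with the exponential Laplace tail through a Chernoff/moment-generating-function estimate, and to show this tail is at most $\alpha$; monotonicity of the CDF then yields $r\le r^*$. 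The Laplace-scale quantile contributes the term $-s\log\alpha$, while the spread of the truncated-normal average contributes the additional $-\log\alpha$ (the ``$+1$'' factor in $r^*=-\log(\alpha)(2a/(\epsilon\sqrt{M})+1)$). I expect this step to require care, since a naive bound ($|\bar{T}^{t,R}(\mathcal{P})|\le\sqrt{M}a$ plus the Laplace quantile) gives a different, often looser, constant; this is presumably why the authors report that \eqref{eqtypeIIbound} is sharp for $a>2$ but only moderately precise for $1\le a\le2$, and verify its validity numerically.

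The final step is routine substitution and evaluation. Since the limiting type II error $P\{|\sqrt{M}a+\eta|<r\}$ is monotonically increasing in the acceptance radius, the bound $r\le r^*$ lets me replace $r$ by $r^*$ and compute $P\{|\sqrt{M}a+\eta|<r^*\}$ directly from the Laplace CDF. When $r^*<\sqrt{M}a$ the interval $(-r^*-\sqrt{M}a,\ r^*-\sqrt{M}a)$ lies entirely to the left of zero, and direct integration gives precisely the first term of \eqref{eqtypeIIbound}, namely $\tfrac12(\alpha^{-1-\epsilon\sqrt{M}/2a}-\alpha^{1+\epsilon\sqrt{M}/2a})\exp(-\epsilon M/2)$. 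When $r^*>\sqrt{M}a$ the interval straddles zero, and integration yields $1-\tfrac12\alpha^{1+\epsilon\sqrt{M}/2a}(e^{\epsilon M/2}+e^{-\epsilon M/2})$; since $e^{\epsilon M/2}+e^{-\epsilon M/2}\ge e^{\epsilon M/2}-e^{-\epsilon M/2}$, this is in turn at most $1-\tfrac12\alpha^{1+\epsilon\sqrt{M}/2a}(e^{\epsilon M/2}-e^{-\epsilon M/2})$, giving the second term. This last weakening also has the convenient effect of producing a single valid bound that covers the borderline sub-case in which the true $r$ falls below $\sqrt{M}a$ even though $r^*$ exceeds it. Attaching the two indicators $\mathbb{I}_{\{r^*<\sqrt{M}a\}}$ and $\mathbb{I}_{\{r^*>\sqrt{M}a\}}$ to the two regimes then completes the bound \eqref{eqtypeIIbound}.

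Summarizing, the degeneration under $H_1$ and the Laplace-interval evaluation are mechanical once the limiting distribution is in hand; the substantive work is the tail bound establishing $r\le r^*$ for the non-degenerate null distribution, which is where I would expect the analysis to demand the most effort and where the specific form of $r^*$ is pinned down.
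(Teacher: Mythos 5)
Your reduction is the right one, and the parts you actually execute are correct: under $H_1$ the truncated statistics degenerate so that $\bar{T}^{t,\epsilon}(\mathcal{P})$ is asymptotically $\sqrt{M}a+\eta$ with $\eta\sim\mathrm{Lap}(0,2a/(\epsilon\sqrt{M}))$, the limiting type II error is the Laplace mass of $(-r-\sqrt{M}a,\,r-\sqrt{M}a)$, and your closed-form evaluations at $r^*$ (using $\sqrt{M}a/s=\epsilon M/2$ and $e^{r^*/s}=\alpha^{-1-\epsilon\sqrt{M}/2a}$) reproduce the two displayed terms exactly, including the harmless weakening of $e^{\epsilon M/2}+e^{-\epsilon M/2}$ to $e^{\epsilon M/2}-e^{-\epsilon M/2}$ in the straddling case. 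This matches the characterization the paper itself states in Section 4.2, so the architecture is sound.

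The genuine gap is that you never prove the one claim on which the whole bound rests, namely $r\le r^*$, equivalently $P_{H_0}\{|\bar{T}^{t,R}(\mathcal{P})+\eta|\ge r^*\mid\bX_{\bD},\mathcal{R}\}\le\alpha$. You defer this to an unspecified ``Chernoff/moment-generating-function estimate,'' but it is not evident that such an estimate closes: the natural split $r^*=(-\log\alpha)+(-s\log\alpha)$ with a union bound gives $P\{|\bar{T}^{t,R}|\ge-\log\alpha\}+P\{|\eta|\ge-s\log\alpha\}$, whose second term alone already equals $\alpha$, so the union bound yields roughly $2\alpha$ rather than $\alpha$; and the conditional-on-$\bar{T}^{t,R}$ computation $P\{|x+\eta|\ge r^*\}=e^{-r^*/s}\cosh(x/s)$ leads to a requirement of the form $E_{H_0}\{e^{\bar{T}^{t,R}/s}\}\le\alpha^{-1/s}$, which a sub-Gaussian bound on the truncated summands delivers only when $s$ is not too small (it can fail for large $\epsilon\sqrt{M}/a$). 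So the specific constant in $r^*=-\log(\alpha)\left(2a/(\epsilon\sqrt{M})+1\right)$ — which is precisely the two-sided $\alpha$-critical value of $\mathrm{Lap}(0,s+1)$ — is doing real work that your sketch does not supply; a comparison of the null law of $\bar{T}^{t,R}+\eta$ with $\mathrm{Lap}(0,s+1)$, or some other argument pinning down this exact quantile, is needed. Until that step is carried out (and shown to hold for all admissible $(M,a,\epsilon,\alpha)$, not just the regimes where a crude tail estimate suffices), the proof is incomplete at its central point.
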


\section{Empirical Illustrations}\label{Illustrations}

The results in Section \ref{ss:distance} suggest we should make $a$ as large as possible and $M$ as small as possible to ensure the proximity between $\bar{T}^{t,R}(\mathcal{P})$ and $T(\bD)$. However, the accuracy of the approximation of $T(\bD)$ is only part of the story. We need to add Laplace noise to protect privacy.  To maximize the usefulness of the privately-computed $t$-statistic $\bar{T}^{t,\epsilon}(\mathcal{P})$, we seek to add as little noise as possible while still satisfying differential privacy. This pushes us to make $a$ smaller rather than larger, and to make $M$ larger rather than smaller. How do we trade off accuracy in $\bar{T}^{t, R}(\mathcal{P})$ for reductions in variance of the Laplace noise? We have a partial answer this question from the asymptotic results in Section \ref{ss:power}, but, in practice, we have to analyze finite samples. Which choices of $(M,a)$ tend to offer higher accuracy for a given risk level $\epsilon$ with finite samples?  

In this section, we address this question 
using simulation studies. 
In all simulations, we assume that $\hat{\beta}_j \sim {\rm N}(\beta_j, \Sigma_{j,j}(\bD))$ and $\hat{\beta}_{j,l} \sim {\rm N}(\beta_j, \Sigma_{j,j}(\bD_l))$ for any $j$ and $l$, i.e., we assume that {\bf A1} holds. We consider two scenarios, one where $M\bX_{\bD_l}^\top\bX_{\bD_l} \approx (\bX_{\bD}^\top\bX_{\bD})$, where $l=1,\dots,M$, i.e., where {\bf A2} holds, and the other where this is not necessarily the case. The scenarios are generated as follows.

In Scenario I, we work directly with the theoretical distributions of the $t$-statistics, without simulating and partitioning values of $\bD$. For an arbitrary regression coefficient $\beta$, let $\mu_T$ be the number of standard deviations that its value is from zero.  We consider multiple values of $\mu_T$ in the simulation.  For any $\mu_T$, we generate  $T$ and  $\bar{T}^{t,R}$  from their sampling distributions as follows,
\begin{align}\nonumber
T & \sim  {\rm N}(\mu_T,1) \\\label{T1_T2_1}
\bar{T}^{t,R} & = \frac{1}{M}\sum_{l=1}^M \sqrt{M}\left(-a\mathbb{I}_{(-\infty,-a)}(Z_l) + Z_l\mathbb{I}_{[-a,a]}(Z_l) + a\mathbb{I}_{(a,\infty)}(Z_l)\right). 
\end{align}
We let $Z_l \overset{i.i.d.}{\sim} {\rm N}(\sqrt{M}\mu_T,1)$. We use the Laplace mechanism based on the appropriate global sensitivity values to generate $\bar{T}^{t, \epsilon}$.

Each $T$ and $\bar{T}^{t,R}$ is generated independently.  Generally, one would expect their values to be positively correlated when computed on some $\bD$.  However, generating them independently guarantees that {\bf A2} holds. Scenario I  provides lower bounds for cases where the $t$-statistics are positively correlated, since the  statistics should be more similar when positively correlated than when independent.

In Scenario II, we work with a subset of the March 2000 Current Population Survey (CPS) public use file comprising $n = 49,436$ heads of households with non-negative incomes.  This dataset was used by \cite{reiter;2005b} and \cite{chen;machanavajjahala;reiter;barrientos;2016},   among others.  In order to use realistic predictor distributions, we set $\bX_{\bD}$ to be an $n \times 25$ matrix of values derived from the CPS data.  Its columns include age in years, age squared, education (16 levels), marital status (7 levels), and sex (2 levels). We generate multiple sets of the response variable $Y$ from linear regressions on $\bX_{\bD}$, each using a different, pre-specified set of $\boldsymbol{\beta}= (\beta_0,\ldots,\beta_{24})^\top$, so as to control the importance of the regression coefficients.  For $j=0, \dots, 24$, let ${\Sigma_{j,j}(\bD)}$ be the $j$th diagonal element of ${\sigma}^2(\bX_{\bD}^\top\bX_{\bD})^{-1}$,  where $\sigma = 0.82$.   This value of $\sigma$ corresponds to the residual standard error of a linear regression fitted using the logarithm of income---one of the variables in the CPS data---as the response variable and $\bX_{\bD}$ as predictors.  To derive any one $\boldsymbol{\beta}= (\beta_0,\ldots,\beta_{24})^\top$, we set each $\beta_j = \mu_T \sqrt{\Sigma_{j,j}(\bD)}$ for some specified number of standard deviations $\mu_T$ from zero.  Using this $\boldsymbol{\beta}$, we simulate realizations of $T$ and $\bar{T}^{t,R}$ via the following steps.
\begin{itemize}
\item[i)] For $i=1, \dots, n$, generate $y_i$ from ${\rm N}(x_i^\top\boldsymbol{\beta},\sigma^2)$, where $x_i$ denotes the $i$th row-vector of $\bX_{\bD}$ and $\sigma =  0.82$.   
\item[ii)] Set $\bD=\{(x_i, y_i)\}_{i=1}^n$, and generate a random partition $\mathcal{P}=\{\bD_l\}_{l=1}^M$.
\item[iii)] Get a realization of $T = T(\bD)$ by computing the $t$-statistic of the $j$th regression coefficient estimated from $\bD$. 
\item[iv)] For $l=1, \dots, M$, set $Z_l = T(\bD_l)$ as the $t$-statistic of the $j$th regression coefficient obtained from the regression of $y$ on $\bX_{\bD_l}$.  Get realizations of $\bar{T}^{t,R}$ from (\ref{T1_T2_1}).
\end{itemize} 
We then add Laplace noise to generate $\bar{T}^{t, \epsilon}$.

For both scenarios, we let $\mu_T \in \{0,0.1,\ldots,1,2,\ldots,10\}$, let $M \in \{10, 25, 50, 75, 100\}$, let $a \in \{1, \ldots, 10\}$, and let $\epsilon \in\{0.5,1,2.5,5\}$.  We generate simulations for all possible combinations of $(\mu_T, M, a, \epsilon)$. For each combination $(\mu_T, M, a, \epsilon)$, we generate 100,000 realizations of $(T, \bar{T}^{t,R})$ for Scenario I and  1,000 realizations for Scenario II.  

We evaluate the significance tests based on $\bar{T}^{t, \epsilon}$ by comparing the power  at each value of $\mu_T$ to the power of the test based on $T$ at each corresponding value of $\mu_T$.  We evaluate the properties of the differentially private sign measures by comparing how often one can infer the correct sign of each $\beta_j$ from the corresponding $T$ and $\bar{T}^{t,\epsilon}$. We also compute the probability that a user makes the same decision about the significance level or sign when using $T$ and $\bar{T}^{t,\epsilon}$; we call these matching probabilities.

\subsection{Assessing inferences about  significance}\label{Illus:signif}

We study the significance properties of the $t$-statistics using the following quantities,
\begin{eqnarray*}
p_0({\rm t},\gamma) & = & P\left\{|T| < {\rm t} \left| \mu_T = \gamma, \bX_{\bD}  \right. \right\}, 
\\
p_{t,\epsilon}({\rm t},\gamma,M,a,\epsilon)  & =& P\left\{|\bar{T}^{t, \epsilon}| < {\rm t} \left| \mu_T  = \gamma, M, a, \epsilon, \bX_{\bD}, \mathcal{R} \right. \right\}.
\end{eqnarray*}
As a slight abuse of notation, we condition these probabilities on $(\gamma,M,a,\epsilon)$ to highlight that they parametrize the probabilities.

For a given significance level $\alpha$ and type II error rate $\lambda_0$, let $r_0$ and $q_0$ be positive constants such that $p_0(r_0,0)= 1-\alpha$ and $p_0(r_0,q_0)= \lambda_0$. Notice that $r_0$ is the $(1-\alpha/2)$th quantile of the distribution of $T$ when $\mu_T = 0$, i.e., $r_0$ is the critical value under the null hypothesis $H_0: \beta_j=0$ that ensures a confidence level of $1-\alpha$ for the test $\mathbb{T} = \mathbb{I}_{(-r_0,r_0)} (T)$. The value $q_0$ is the number of standard deviations from zero at which the test $\mathbb{T}$ reaches the desired Type II error $\lambda_0$, i.e., under $H_1: \beta_j = q_0 \sqrt{\Sigma_{j,j}(\bD)}$, the power of this test is equal to $1-\lambda_0$.  

Let $r$ and $\lambda(M, a, \epsilon)$ be positive constants such that $p_{t,\epsilon}(r,0,M,a,\epsilon)= 1-\alpha$ and $\lambda(M,a,\epsilon) = p_{t,\epsilon}(r,q_0,M,a,\epsilon)$.   For given values of $(M, a, \epsilon)$, $r$ is the critical value under $H_0$ that ensures a confidence level of $1-\alpha$ for the test $\mathbb{T}_{t,\epsilon} = \mathbb{I}_{(-r,r)}(\bar{T}^{t, \epsilon})$. The value $\lambda(M,a,\epsilon)$ corresponds to the type II error rate of $\mathbb{T}_{t,\epsilon}$ under $H_1: \beta_j = q_0 \sqrt{\Sigma_{j,j}(\bD)}$. 

To assess how similar the test $\mathbb{T}_{t,\epsilon}$ is to $\mathbb{T}$, we use the loss function,   
$$
L_{t,\epsilon}^{sig}(M,a,\epsilon) = \max(0,\lambda(M,a,\epsilon)-\lambda_0).
$$
For a given significance level $\alpha$, we say there is zero loss of power from using $\mathbb{T}_{t,\epsilon}$ when the type II error rate of $\mathbb{T}_{t,\epsilon}$ at $q_0$ is less than $\lambda_0$. Otherwise, we record the corresponding loss of power, $\lambda(M,a,\epsilon)-\lambda_0$.   For all results in this subsection, we set $\alpha = 0.05$ and $\lambda_0 = 0.2$.

We begin by examining the  values of $L_{t,\epsilon}^{sig}(M,a,\epsilon)$ at the different combinations of $(M,a)$ when $\epsilon = \infty$, i.e., no noise is added to $\bar{T}^{t,\epsilon}$.  To save space, we present these results in the supplementary material. 
In Scenario I, the power for $\mathbb{T}_{t,\epsilon}$ and  $\mathbb{T}$ are almost identical for $a \geq 2$  for all values of $M$. This finding conforms with the theory in Section \ref{sec:theory}. 
The results with the CPS data in Scenario II follow a similar pattern except for $M=100$. This is expected since large values of $M$ weaken the validity of {\bf A2}.


We next consider $\epsilon < \infty$, as needed to satisfy differential privacy.   Figure \ref{Figure_power_2} displays values of $L_{t,\epsilon}^{sig}(M,a,\epsilon)$ for $\epsilon \in \{0.5, 1.0, 2.5, 5.0\}$. 
In Scenario I, $L_{t,\epsilon}^{sig}(M,a,\epsilon)$ tends to be smaller when $M$ is large and $a$ is small, and largest when $M$ is small and $a$ is large. In other words, the discrepancy between the power of $\mathbb{T}_{t,\epsilon}$  and $\mathbb{T}$ is smaller when the global sensitivity is smallest. 
For values of $\epsilon > 1$, there is at least one combination of $(M,a)$ for $\mathbb{T}_{t,\epsilon}$ that provides almost no loss in  power.  For $\epsilon = 1$, with $\mathbb{T}_{t,\epsilon}$ we need $M \geq 50$ and $a \in \{1,2\}$  to experience only a small power loss. For $\epsilon = 0.5$, with $\mathbb{T}_{t,\epsilon}$  we still can achieve only modest power losses by using  $(M \geq 75, a = 1)$. 
As expected, the power of $\mathbb{T}_{t,\epsilon}$ is strongly influenced by $\epsilon$. As $\epsilon$ gets smaller, so does the power.  The reduction in power for small values of $\epsilon$ is the price to pay for strengthening the privacy guarantee. However, we emphasize that $\mathbb{T}_{t,\epsilon}$ is still a valid test when $\epsilon$ is small, in that it gives the correct type I error rates regardless of the value of $\epsilon$.
Finally, the result patterns obtained under Scenario II generally match those obtained under Scenario I except for $M=100$. For this value of $M$, we notice a loss of power for some of the regression coefficients. We attribute this discrepancy to the fact that large values of $M$ weaken the validity of {\bf A2}.

\begin{figure}[t]
\centering
\includegraphics[scale=0.6]{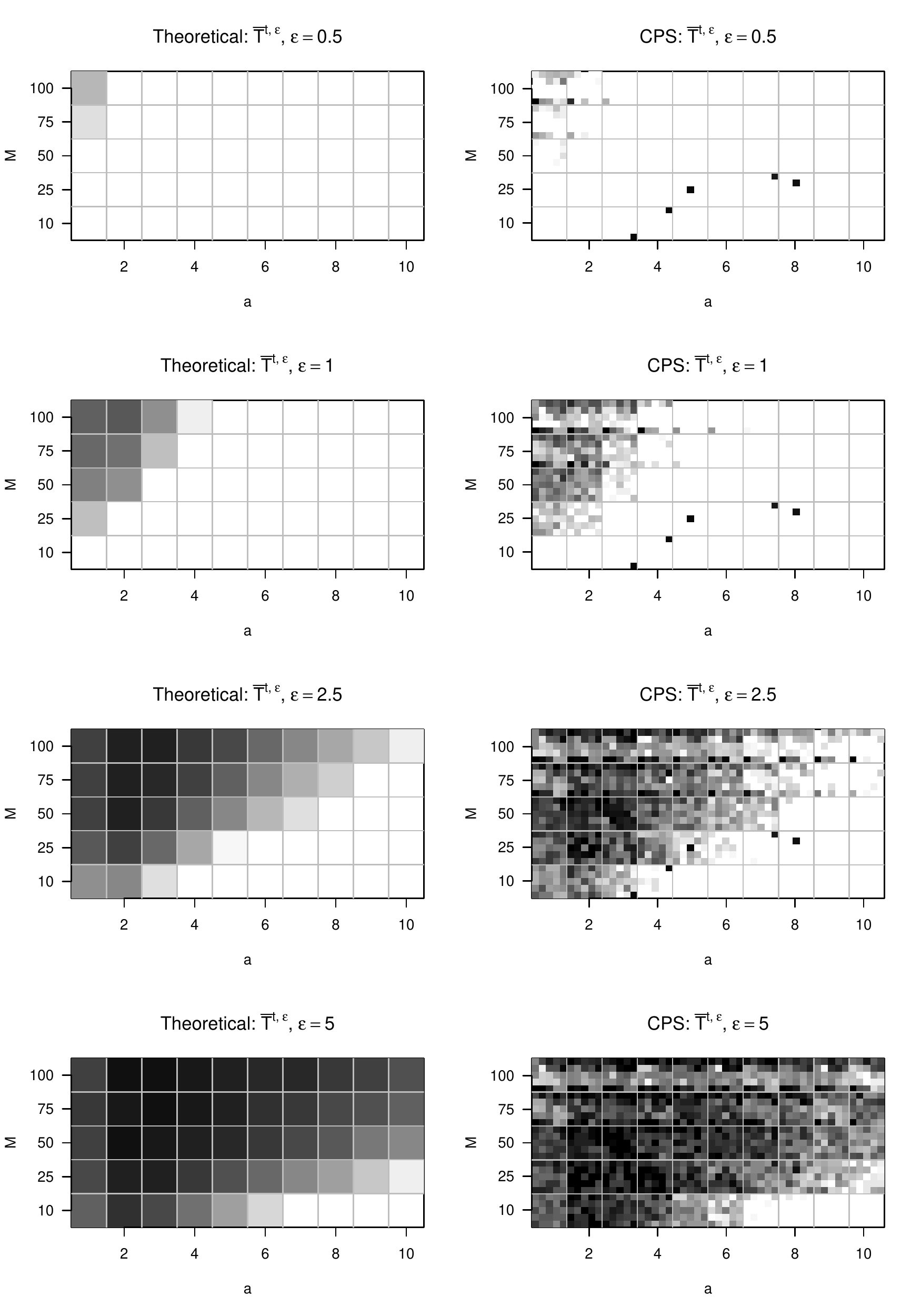}
\caption{Values of $L_{t,\epsilon}^{sig}(M,a,\epsilon)$  associated with $\bar{T}^{t,\epsilon}$ for significance at different combinations of $(M, a, \epsilon)$ with $\alpha = 0.05$, and $\lambda_0 = 0.2$.  Left and right panels show the results for Scenario I and II, respectively. For Scenario I, each cell represents the average of 100,000 runs at that $(M,a)$.  For Scenario II, each $(M,a)$ cell comprises a $5 \times 5$ array of sub-cells representing the results for the $25$ coefficients in the CPS regression. Individual sub-cell results are averages of 1,000 runs. 
Within any cell or sub-cell, black represents $L_{t,\epsilon}^{sig}(M,a,\epsilon) = 0$ and white represents $L_{t,\epsilon}^{sig}(M,a,\epsilon) \geq 0.25$, with lighter grays as the loss function approaches 0.25.}
\label{Figure_power_2}
\end{figure}

\subsection{Assessing inferences about signs of coefficients}\label{Illus:sign}
Since the sign of $\mu_T$ and $\beta_j$ is the same, we restrict our analysis to the sign of $\mu_T$ only. Because the Laplace and Student-$t$ distributions are symmetric, we only consider the case where $\mu_T \geq 0$. 
We study the sign of the statistics using the following quantities,
\begin{align*}
s_0(\gamma) & = P\left\{{\rm sign}(T) = {\rm sign}(\mu_T) \left| \mu_T = \gamma, \bX_{\bD}\right. \right\},
\\
s_{t,\epsilon}(\gamma,M,a,\epsilon)  & = P\left\{{\rm sign}\left(\bar{T}^{t,\epsilon}\right) =  {\rm sign}(\mu_T) \left| \mu_T  = \gamma, M, a, \epsilon, \bX_{\bD}, \mathcal{R} \right. \right\}.  
\end{align*}
These represent  probabilities that the $t$-statistics have the same sign as $\beta_j$  when the value of $\beta_j$ is $\gamma$ standard deviations from zero. For a given probability $\alpha_0$, let $\mu_0$ be a positive constant such that $s_0(\mu_0) = \alpha_0$, i.e., $\mu_0$ is the number of standard deviations at which $T(\bD)$ has the same sign of $\beta_j$ with a probability equal to $\alpha_0$. To assess how similar $\alpha_0$ is to $s_{t,\epsilon}(\mu_0,M,a,\epsilon)$, we use the loss function   
$$
L_{t,\epsilon}^{sgn}(M,a,\epsilon) = \max(0,\alpha_0-s_{t,\epsilon}(\mu_0,M,a,\epsilon)). 
$$
We say that $\bar{T}^{t,\epsilon}$ and $T(\bD)$ result in similar inferences about the sign of $\beta_j$ when \linebreak $\alpha_0>s_{t,\epsilon}(\mu_0,M,a,\epsilon)$. For all results in this subsection, we set $\alpha_0 = 0.95$.

We again begin with the values of $L_{t,\epsilon}^{sgn}(M,a,\epsilon)$ for $\epsilon=\infty$; results are displayed in the supplementary material.  
Across scenarios, inferences about the sign of $\beta_j$ based on $\bar{T}^{t,\epsilon=\infty}$ are quite accurate  
for all combinations of $(M,a)$. 
Figure \ref{Figure_sign_2} displays results for $L_{t,\epsilon}^{sgn}(M,a,\epsilon)$ when $\epsilon \in \{0.5, 1.0, 2.5, 5.0\}$. 
In general, when $\epsilon \leq 2.5$,  $\bar{T}^{t,\epsilon}$ offers many combinations of $(M, a)$ that result in accurate inferences about the sign of $\beta_j$, especially when $a$ is small and $M$ is large. We also find combinations of $(M,a)$ that result in accurate inferences about the sign even when $\epsilon = 0.5$, in particular when $M \geq 50$ and $a \in \{1,2\}$.  When $\epsilon = 5$, the results for  $\bar{T}^{t,\epsilon}$ are practically indistinguishable for almost all combinations of $(M,a)$. These general findings hold for both scenarios, so that  $\bar{T}^{t,\epsilon}$ provides a differentially private mechanism to release the sign of $\beta_j$ that seems robust against violations of {\bf A2}.

%

\begin{figure}[t]
\centering
\includegraphics[scale=0.6]{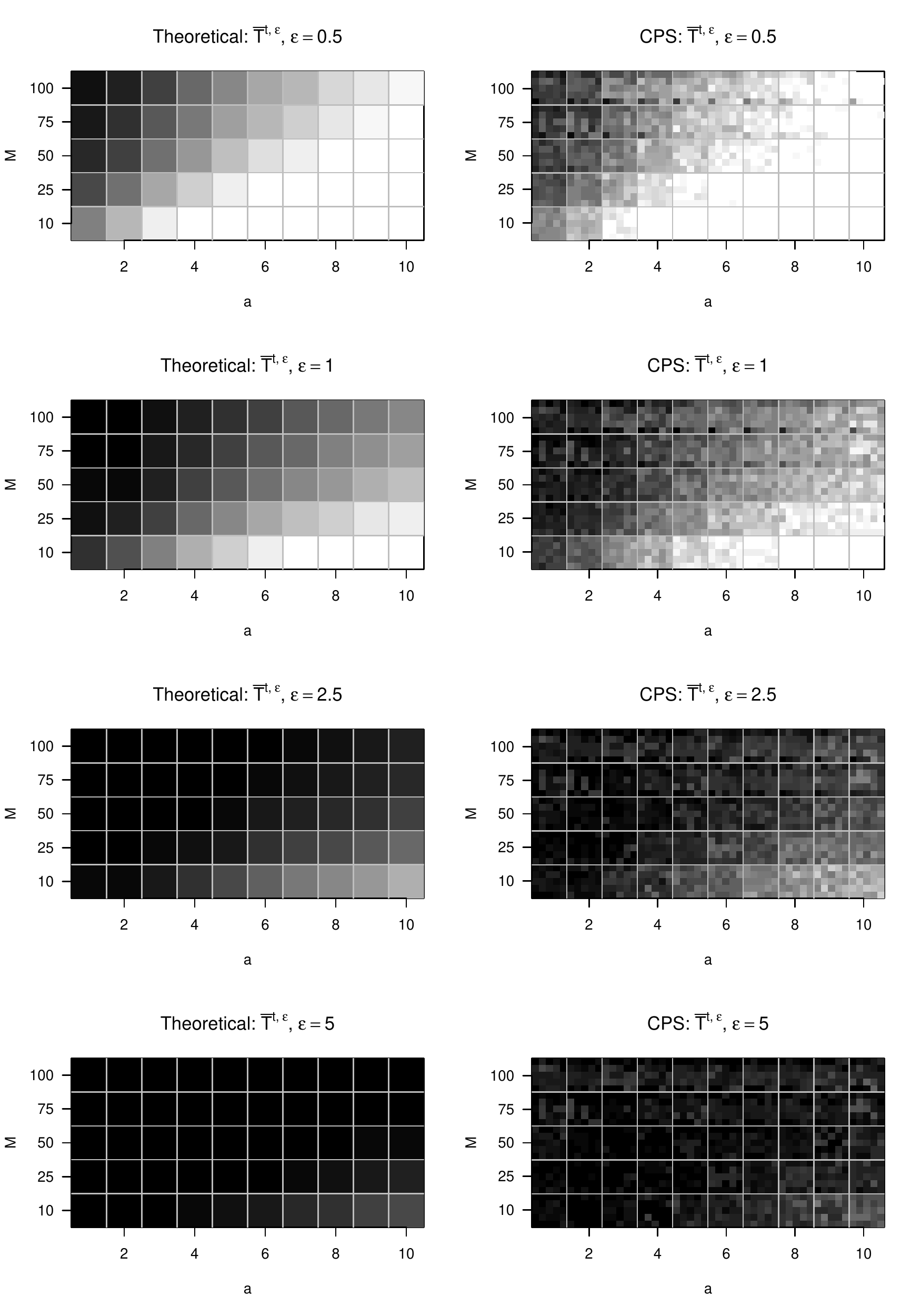}
\caption{Values of  $L_{t,\epsilon}^{sgn}(M,a,\epsilon)$ at different combinations of $(M, a, \epsilon)$ with $\alpha_0 = 0.95$. Left and right panels show the results for Scenario I and II, respectively. For Scenario I, each cell represents the average of 100,000 runs at that $(M,a)$.  For Scenario II, each $(M,a)$ cell comprises a $5 \times 5$ array of sub-cells representing the results for the $25$ coefficients in the CPS regression. Individual sub-cell results are averages of 1,000 runs. Within any cell or sub-cell, black represents $L_{t,\epsilon}^{sgn}(M,a,\epsilon) = 0$ and white represents $L_{t,\epsilon}^{sgn}(M,a,\epsilon) \geq 0.25$, with lighter grays as the loss function approaches 0.25.} \label{Figure_sign_2}
\end{figure}

\subsection{Matching probabilities}\label{matching}

We define matching probabilities as the probability that ${\rm sign}(T) = {\rm sign}(\bar{T}^{t, \epsilon})$ and the probability that $\mathbb{T}_{t,\epsilon}=\mathbb{T}$. We assess these probabilities using the quantities,
\begin{align*}
m_{t, \epsilon}^{sig}(\mu_T,\epsilon) &= \min_{M,a}P\left\{\mathbb{T}_{t, \epsilon}=\mathbb{T}|\mu_T,M,a,\epsilon, \bX_{\bD}, \mathcal{R} \right\}, \\
m_{t, \epsilon}^{sgn}(\mu_T,\epsilon) &= \min_{M,a}P\left\{{\rm sign}(T) = {\rm sign}(\bar{T}^{t, \epsilon})|\mu_T,M,a,\epsilon, \bX_{\bD}, \mathcal{R} \right\}, 
\end{align*}
where each minimum is over all possible combinations of  $M \in \{10,25,50,75,100\}$ and $a \in \{1,2,\ldots,10\}$.  As we minimize over $(M,a)$, these metrics represent the worst case matching probabilities for these simulations. In the supplementary material, we present analogous figures showing the maximum values of the matching probabilities, which represent the best case matching probabilities for these simulations.

Figure \ref{Figure_matching_prob_significance} displays the values of $m_{t, \epsilon}^{sig}(\mu_T,\epsilon)$ for different values of $\mu_T$ and $\epsilon$. The results in Scenario I and II follow similar patterns so we describe them simultaneously.   Under the null hypothesis, i.e., $\mu_T=0$, values of $m_{t, \epsilon}^{sig}$ are greater than $0.85$.  When $\mu_T$ is large enough, values of $m_{t, \epsilon}^{sig}$ are close to one.   The value of $\mu_T$ at which $m_{t, \epsilon}^{sig}$ is close to one is inversely related to the value of $\epsilon$. For example, when $\epsilon =5$ and $\mu_T>5$, then $m_{t, \epsilon}^{sig} \approx 1$. However, when $\epsilon =0.5$, $m_{t, \epsilon}^{sig} < 0.5$ for all values of $\mu_T$.  Values of $m_{t, \epsilon}^{sig}$ tend to be small when $\mu_T \in [r_0,r]$, where $r_0$ and $r$ are the critical values associated with $\mathbb{T}$ and $\mathbb{T}_{t, \epsilon}$, respectively. 
Because $r$ increases as $\epsilon$ decreases, the range of values of $\mu_T$ where $m_{t, \epsilon}^{sig}$ is small becomes wider as $\epsilon$ decreases.

Figure \ref{Figure_matching_prob_significance} also summarizes the results for $m_{t,\epsilon}^{sgn}(\mu_T,\epsilon)$.   As expected, in both scenarios, increases in $\mu_T$ correspond to increases in the matching probability. The rate at which  $m_{t,\epsilon}^{sgn}$ increases as a function of $\mu_T$ depends on $\epsilon$: the larger the $\epsilon$, the faster the rate. In both scenarios, we observe a high matching probability ($m_{t,\epsilon}^{sgn}>0.9$) when $\mu_T \geq 3$ and $\epsilon\geq 2.5$. When $\mu_T<3$,  $m_{t,\epsilon}^{sgn}$ ranges from $0.5$ to $0.9$ for almost all $\epsilon$. It reaches a minimum value when $\mu_T=0$, regardless of the value of $\epsilon$; however, when $\mu_T=0$, matching the sign of  $T$ and  $\bar{T}^{t,\epsilon}$ arguably is not important for interpretations.

\begin{figure}[t]
\centering
\includegraphics[scale=0.33]{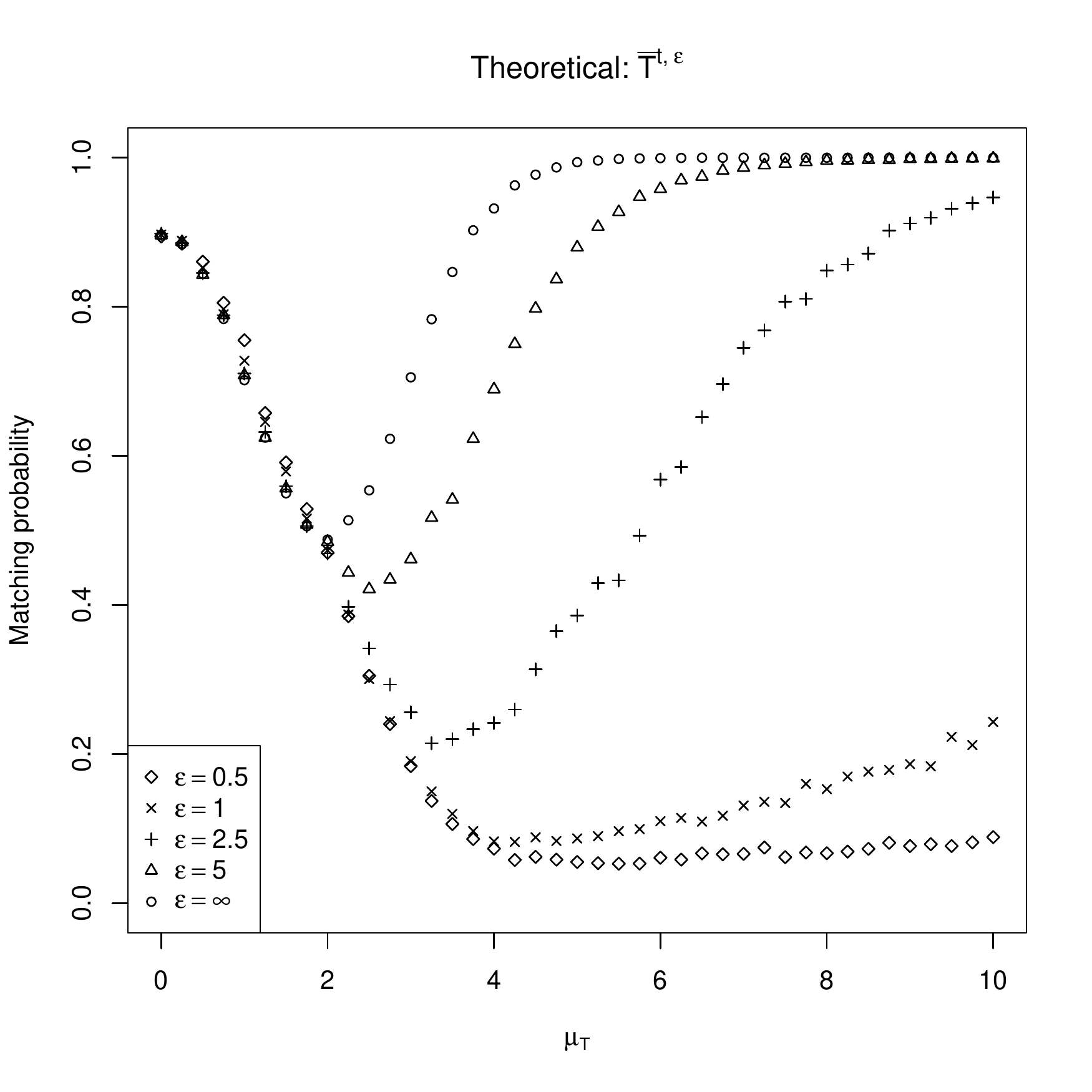}
\includegraphics[scale=0.33]{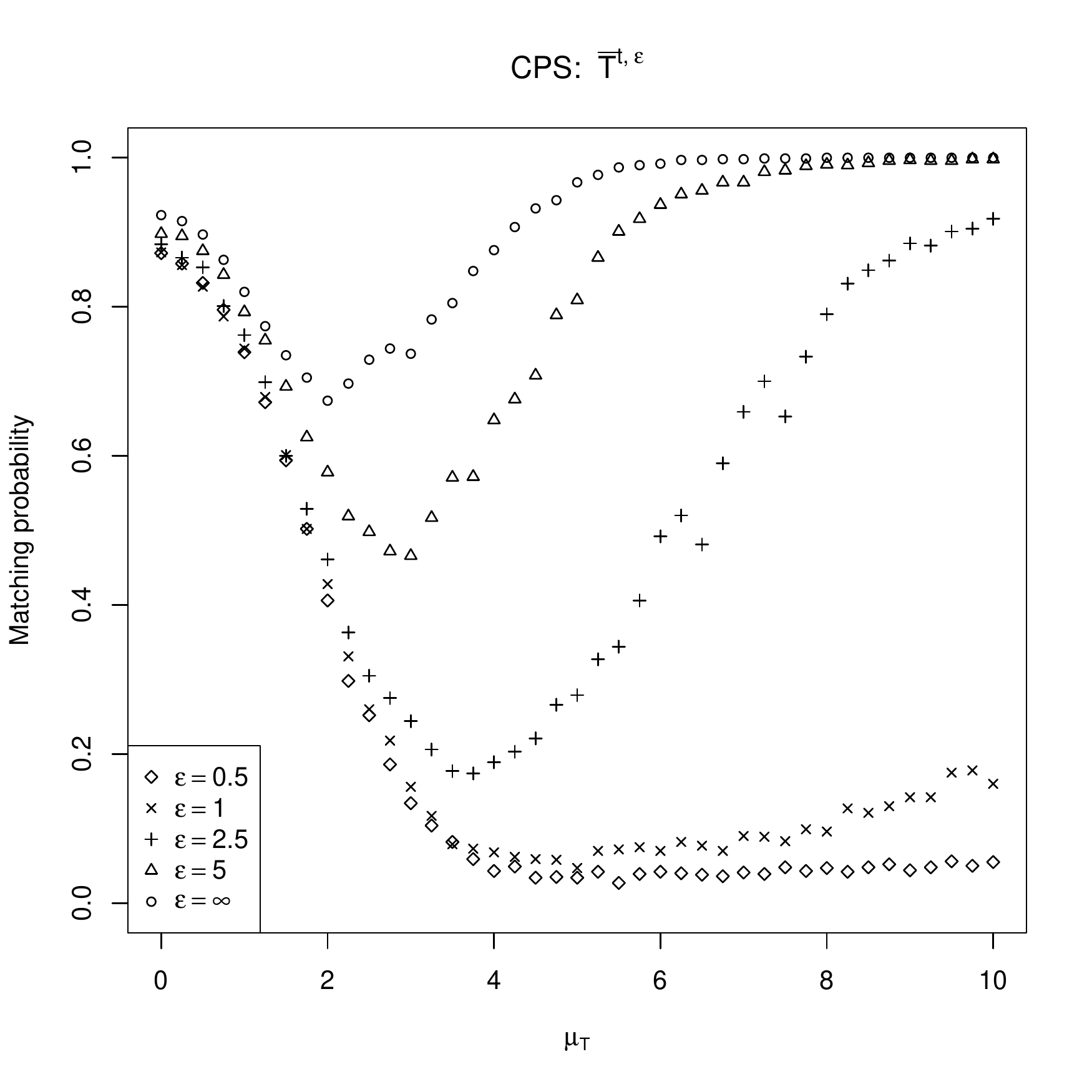}
\includegraphics[scale=0.33]{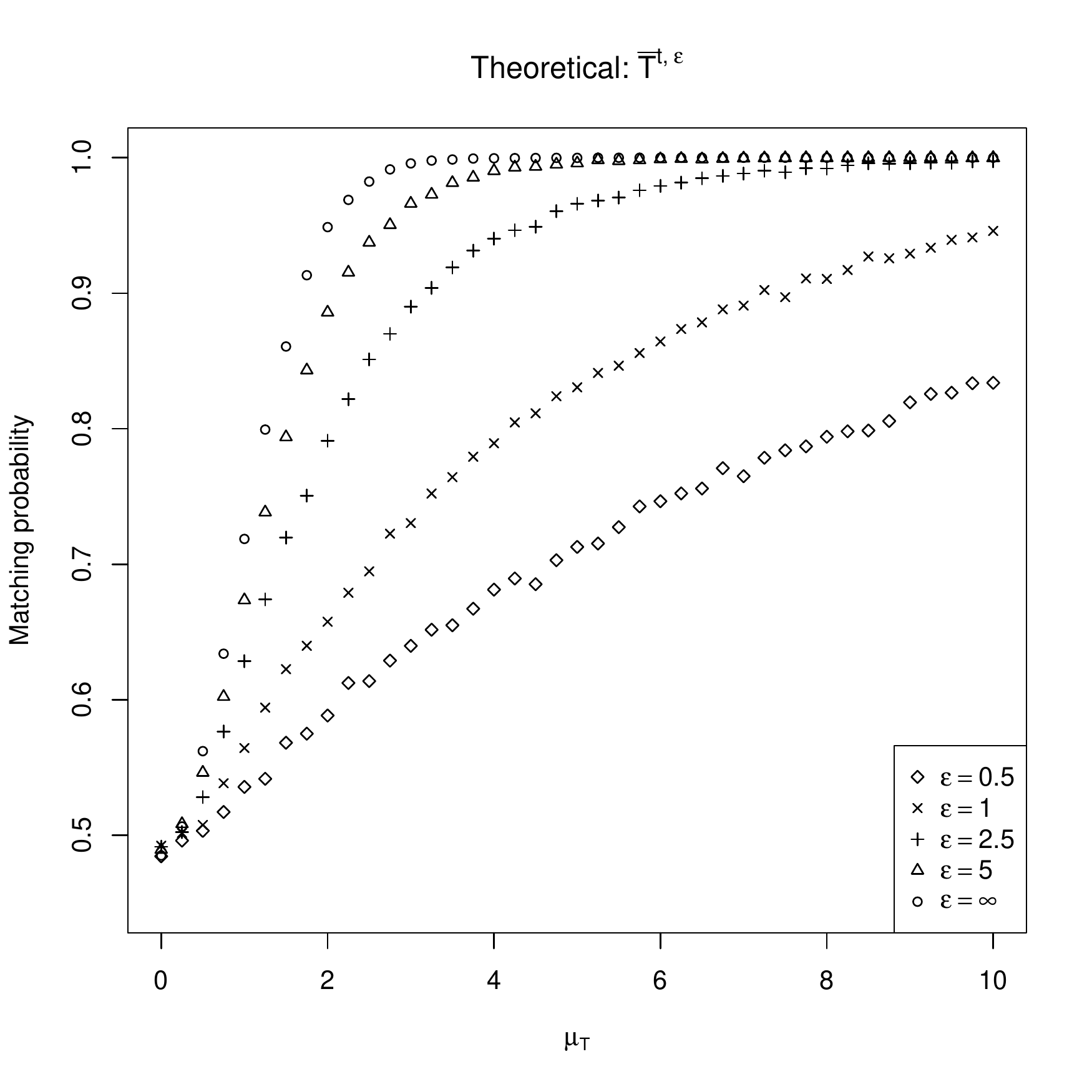}
\includegraphics[scale=0.33]{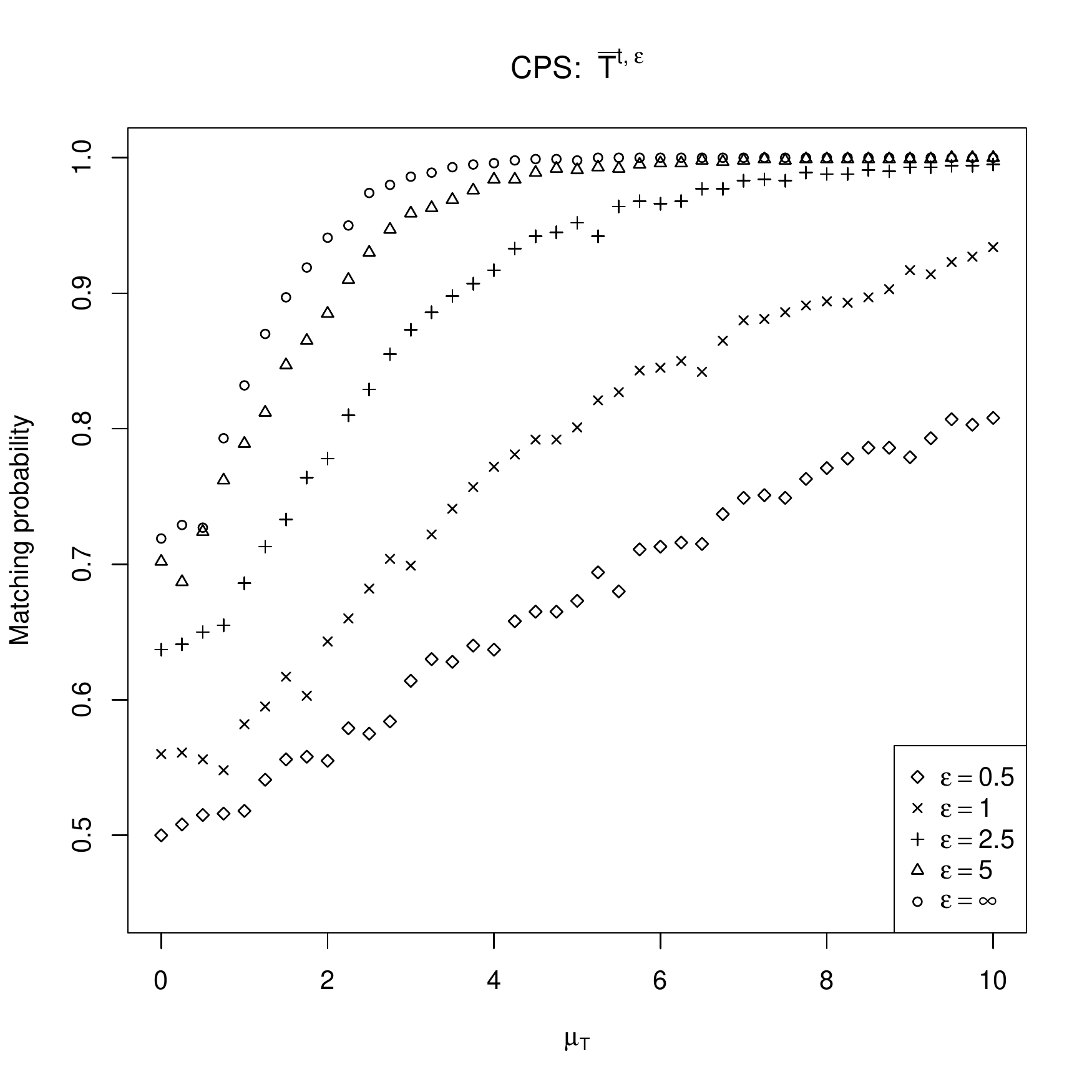}\caption{Values of  $m_{t,\epsilon}^{sig}(\mu_T,\epsilon)$ and $m_{t,\epsilon}^{sgn}(\mu_T,\epsilon)$ at different combinations of $\mu_T$ and $\epsilon$ with $\alpha = 0.05$.   Left and right panels show Scenario I and II, respectively. Top and bottom panels show the results for $m_{t,\epsilon}^{sig}(\mu_T,\epsilon)$ and $m_{t,\epsilon}^{sgn}(\mu_T,\epsilon)$,  respectively. For Scenario II, each point presents $\mu_T$ and the minimum value of $m_k^{sig}(\mu_T,\epsilon)$ taken over the $25$ regression coefficients.}
\label{Figure_matching_prob_significance}
\end{figure}

\section{Choosing $M$ and $a$ Without Additional Privacy Loss}\label{ChoosingParams}

To use these differentially private test statistics, the data producer or, when permitted in a verification server, the analyst first fixes the desired privacy level $\epsilon$ and then must select values for $M$ and $a$. Here, we consider an analyst who does not get to choose $\epsilon$ but does get to choose $(M,a)$.  Analysts who get to choose $\epsilon$, e.g., when allocating a total privacy budget across multiple queries, could repeat the approach described here with different values of $\epsilon$.


In this section, we present a three step approach for selecting values of $(M, a)$ that does not incur additional privacy loss.  We illustrate these steps with a regression analysis of the CPS data, using the same $\bX_{\bD}$ as before and the reported values of household income on a logarithmic scale as the response variable. This regression fits reasonably well without obvious violations of the assumption of  i.i.d. Gaussian errors. We present the methodology for $\bar{T}^{t,\epsilon}$ with $\epsilon = 1.5$. The three-step approach also can be used for the sign.

\emph{Step 1: Fix an upper bound for $L_{t,\epsilon}^{sig}(M,a,\epsilon)$}. To begin, the user specifies an upper bound for $L_{t,\epsilon}^{sig}(M,a,\epsilon)$ for their desired significance level $\alpha$ and type II error rate $\lambda_0$.   By fixing $\alpha$ and $\lambda_0$, it is implied that there exists $q_0>0$ such that the power of $T$ equals $(1-\lambda_0)$  when $\beta_j$ is $q_0$ standard deviations from zero. Thus, the bound for $L_{t,\epsilon}^{sig}(M,a,\epsilon)$ represents the loss of power that a user is willing to accept by using $\bar{T}^{t, \epsilon}$ instead of using $T$ when $\beta_j$ is  $q_0$ standard deviations from zero.  In our illustrative example, we set $\alpha = 0.05$ and $\lambda_0 = 0.2$, and fix an upper bound of $0.1$.

\emph{Step 2: Simulate values of $L_{t,\epsilon}^{sig}(M,a,\epsilon)$ for choices of $M$ and $a$}. The user can simulate values of  $L_{t,\epsilon}^{sig}(M,a,\epsilon)$ for different combinations of $(M, a)$ using the strategy described in Scenario I of Section \ref{Illustrations}.  As an example, Table \ref{loss_function_epsilon1.5} displays the values of $L_{t,\epsilon}^{sig}(M,a,\epsilon)$ for different combinations of $(M, a)$.  Of course, this table is not comprehensive; users can create different tables with different combinations of $(M,a)$.  Importantly, we do not compute the entries in Table \ref{loss_function_epsilon1.5} using the CPS data; otherwise, the results would leak information about the confidential data. When auxiliary data are available, such as synthetic data, analysts could base the tables off these auxiliary data using the strategy described in Scenario II of Section \ref{Illustrations}. 

\begin{table}[t]
\begin{center}
\begin{tabular}{l|ccccc}
& \multicolumn{5}{|c}{$M$} \\
$a$  & 10 & 25 & 50 & 75 & 100 \\ \hline
1	&	0.13	&	{\bf 0.05}	&	{\bf 0.02}	&	{\bf 0.01}	&	{\bf 0.01}\\
2	&	0.17	&	{\bf 0.05}	&	{\bf 0.01}	&	{\bf 0.01}	&	{\bf 0.00}\\
3	&	0.32	&	0.11	&	{\bf 0.04}	&	{\bf 0.02}	&	{\bf 0.01}\\
4	&	0.51	&	0.22	&	{\bf 0.10}	&	{\bf 0.06}	&	{\bf 0.04}\\
5	&	0.65	&	0.34	&	0.16	&	{\bf 0.10}	&	{\bf 0.07}\\
6	&	0.74	&	0.47	&	0.25	&	0.16	&	0.12\\
7	&	0.79	&	0.58	&	0.34	&	0.22	&	0.16\\
8	&	0.82	&	0.66	&	0.43	&	0.30	&	0.21\\
9	&	0.84	&	0.72	&	0.51	&	0.37	&	0.27\\
10	&	0.86	&	0.77	&	0.59	&	0.44	&	0.34
\end{tabular}
\end{center}
\caption{\label{loss_function_epsilon1.5} Theoretical values of $L_{t,\epsilon}^{sgn}(M,a,1.5)$ at different combinations of $(a,M)$, $\alpha = 0.05$, and $\lambda_0 = 0.2$. Values in bold satisfy the condition $L_{t,\epsilon}^{sgn}(M,a,1.5)<0.10$.}
\end{table}

\emph{Step 3: Choose $(M, a)$}.  
The user considers all values of $(M,a)$ corresponding to values of $L_{t,\epsilon}^{sig}(M,a,\epsilon)$  below the fixed upper bound. When no combination of $(M, a)$ satisfies this condition, the user has to sacrifice accuracy and increase their error tolerance.
Once possible solutions exist, we recommend that the user choose the smallest value of $M$ from these solutions. Smaller values of $M$ result in larger sample sizes within the partitions, lessening the possibility that {\bf A1} or {\bf A2} are invalid.  
In our example, based on Table \ref{loss_function_epsilon1.5}, the user should choose the smallest values of $M$ for which $L_{t,\epsilon}^{sig}(M,a,\epsilon=1.5)$ is below 0.1, which is $M = 25$.  The user then chooses the value of $a$ that minimizes $L_{t,\epsilon}^{sig}(M=25,a,\epsilon=1.5)$.  From Table \ref{loss_function_epsilon1.5}, $L_{t,\epsilon}^{sig}(M=25,a,\epsilon=1.5)$ reaches its minimum at $0.05$ for $\mathbb{T}_{t,\epsilon}$ and $a \in \{1,2\}$. In theory, any of these values of $a$ should work. In this case, we recommend $a=2$ since this truncation level has the least effect on the approximation to $T(\bD)$. 

We now illustrate this method of choosing $(M, a)$ on the CPS data.  Using $(M=25, a = 2)$, we compute the differentially private p-value and sign using Algorithms 1 and 2; results for all 25 coefficients are displayed in the supplementary material.  
The p-values from the differentially private significance test and the test based on the confidential data $\bD$  agree substantially, resulting in essentially the same conclusions about the significance for most of the regression coefficients. We observe discrepancies when the p-values based on $\bD$ provide weak evidence against the null. Regarding the sign, the outputs from the differentially private algorithm agree with the observed signs for almost all coefficients, except some with large p-values.  When p-values for the test of $H_0: \beta_j =0$  are quite large, changes in sign arguably are inconsequential.

As an additional illustration, we repeat the model selection and estimation using $\epsilon= 0.5$ and upper bound  for $L_{t,\epsilon}^{sig}(M,a,\epsilon)$ equal to $0.2$; detailed results are in the supplementary material. The $(M, a)$ selection algorithm suggests that we set $(M=100, a=1)$.  For estimating the signs, the differentially private algorithms continue to be effective: the privately-computed and observed signs systematically agree on all but three of the coefficients with significant p-values.  For p-values, we see  different interpretations of the significance for 3 out of 25 coefficients.
We conjecture why this occurs when discussing the results in the supplementary material. 
Of course, interpretations about the quality of the results for different $(M, a)$ and $\epsilon$ are data-specific, and we expect lower reductions in data quality for $\bD$ with larger sample size. 

\section{Concluding Remarks}\label{conclusion}

The methods described here allow data producers to provide $\epsilon$-differentially private answers to queries about the statistical significance and signs of coefficients in linear regression models.  Further, the strategy from Section \ref{ChoosingParams} provides users with a principled way to choose $(M, a)$.
We expect that these methods can be applied to significance testing with other regression models.  The key assumptions are {\bf A1} and {\bf A2}, which  are reasonable for many models and data settings.  

The  methods extend trivially to two contexts that are not discussed in previous sections. First, we can test general null hypotheses, $H_0: \beta_j = b$ with $b \in \mathbb{R}$. We simply define the $t$-statistic as $(\hat{\beta}_j-b)/ \sqrt{\hat{\Sigma}_{j,j}}$, and apply the algorithms without modification.  One application of this extension, which we leave to future investigation, is to set $b$ equal to the output of a privately computed regression coefficient $\tilde{\beta_j}$.  The test statistic then could be interpreted as a noisy, truncated estimate of the number of standard errors  $\tilde{\beta_j}$ is from $\hat{\beta}_j$.
Second, we can test the significance of the intercept, i.e., $H_0: \beta_0 = b$. This is equivalent to a $\epsilon$-DP, one sample t-test for a single mean. Unlike other tests for means, this significance test does not presume known bounds on data values or population means.  Nonetheless, it would be interesting to compare the power of the various tests for single means to see when each offers the best performance.

The algorithm described here provides results for one $\beta_j$ at a time. If analysts are given a finite privacy budget, they must spend part of that budget for each coefficient they wish to verify.  Thus, a key area for research is to develop differentially private algorithms that allow queries for multiple test statistics without burning through the privacy budget too quickly.



\section*{Supplementary Material}

The online supplementary materials include a graphical representation of the upper bound provided in Theorem \ref{typeIIbound}, and the proof of this theorem. They also include additional plots from simulations in Section \ref{Illus:signif} and Section \ref{Illus:sign} when $\epsilon=\infty$, and in Section \ref{matching} when the maximum of the matching probabilities is taken over $(M,a)$ for fixed values of $\epsilon$. Finally, the include the p-values and signs for the coefficients in the two examples considered in Section \ref{ChoosingParams}.

\section*{Acknowledgments}
This work is supported by grants from the National Science Foundation (ACI 1443014 and SES 1131897) and the Alfred P. Sloan Foundation (G-2-15-20166003).


\end{document}